\newcommand{\ourname}{\textit{NCSAC}}
\newcommand{\stitle}[1]{\vspace{1ex} \noindent{\bf #1}}
\newtheorem{proof}{Proof}
\newtheorem{theorem}{Theorem}
\newtheorem{definition}{Definition}
\newtheorem{lemma}{Lemma}
\newtheorem{fact}{Fact}
\newtheorem{example}{Example}
\begin{document}

\title{NCSAC: Effective Neural Community Search via Attribute-augmented Conductance}

	\author{Longlong Lin, Quanao Li, Miao Qiao, Zeli Wang, Jin Zhao, Rong-Hua Li,  Xin Luo,  Tao Jia
	\IEEEcompsocitemizethanks{\IEEEcompsocthanksitem Longlong Lin, Quanao Li, and Xin Luo are with the College of
		Computer and Information Science,
		Southwest University. Email: \{longlonglin; luoxin\}@swu.edu.cn, quanaoli174@gmail.com.
        	\IEEEcompsocthanksitem
            Miao Qiao is with the University of Auckland, New Zealand.  Email: miao.qiao@aucklanduni.ac.nz.
					\IEEEcompsocthanksitem  Zeli Wang is with Chongqing University of Posts and Telecommunications.
		Email: zlwang@cqupt.edu.cn.
        		\IEEEcompsocthanksitem  Jin Zhao is with Huazhong University of Science and Technology. Email: zjin@hust.edu.cn.
				\IEEEcompsocthanksitem  Rong-Hua Li is with 
		Beijing Institute of Technology.  Email: lironghuabit@126.com.\IEEEcompsocthanksitem 
        Tao Jia is with the College of
		Computer and Information Science, Chongqing Normal University and Southwest University. Email: tjia@swu.edu.cn.}

	\thanks{Manuscript received XXX, XXX; revised XXX, XXXX.}}

\markboth{Journal of \LaTeX\ Class Files,~Vol.~14, No.~8, August~2021}%
{Shell \MakeLowercase{\textit{et al.}}: A Sample Article Using IEEEtran.cls for IEEE Journals}


\maketitle

\begin{abstract}
Identifying locally dense communities closely connected to the user-initiated query node is crucial for a wide range of applications.
Existing approaches either solely depend on rule-based
constraints or exclusively utilize deep learning technologies to identify target communities. Therefore, an important question is proposed: can deep learning be integrated with rule-based constraints to elevate the quality of community search?  In this paper, we affirmatively address this question by introducing a novel approach called Neural Community Search via Attribute-augmented Conductance, abbreviated as NCSAC. Specifically, NCSAC first proposes a novel concept of attribute-augmented conductance, which harmoniously blends the  (internal and external) structural proximity and the attribute similarity. Then, NCSAC  extracts a coarse candidate community of satisfactory quality using the proposed attribute-augmented conductance. Subsequently, NCSAC frames the community search as a graph optimization task, refining the candidate community through sophisticated reinforcement learning techniques, thereby producing high-quality results. Extensive experiments on
six real-world graphs and ten competitors demonstrate the
superiority of our solutions in terms of accuracy, efficiency,
and scalability. Notably, the proposed solution outperforms state-of-the-art methods, achieving an impressive F1-score improvement ranging from 5.3\% to 42.4\%. For reproducibility purposes, the source code is available at https://github.com/longlonglin/ncsac.
\end{abstract}

\begin{IEEEkeywords}
Community search, cohesive subgraphs, and graph neural networks 
\end{IEEEkeywords}

\section{Introduction} \label{sec:intro}

A community is defined as a subgraph characterized by internal cohesiveness and external sparsity. Community Search (CS) aims to identify the specific community that exhibits a strong correlation with the user-initiated query nodes \cite{fang2020survey}. The importance of CS lies in its wide range of practical applications, including social e-commerce and uncovering personal backgrounds \cite{fang2020survey}.


Existing CS models can be generally classified into two categories (details in Section \ref{sec:existing}): rule-based CS methods \cite{k-core2,k-truss3,k-ecc,DMCS,DBLP:journals/pvldb/WuJLZ15,DBLP:conf/sigmod/DaiQC22,DBLP:conf/kdd/YeLLLLW24,DBLP:conf/focs/AndersenCL06,DBLP:conf/kdd/KlosterG14,DBLP:conf/aaai/LinLJ23,xie2024efficient,DBLP:journals/dase/LiZSYG24} and learning-based CS methods \cite{ICS-GNN,QD-GNN,coclep,cgnp,TransZero,CommunityAf,wang2024scalable,cai2023heterocs}. The former typically identifies the resultant community by utilizing predefined structural constraints (e.g.,  subgraph-based $k$-core/truss  \cite{k-core2,k-truss3} or optimization-based density/modularity\cite{DBLP:journals/pacmmod/ZhangLLZQW25,DMCS,DBLP:journals/pvldb/WuJLZ15,DBLP:conf/sigmod/DaiQC22,DBLP:conf/kdd/YeLLLLW24}). However, they impose cumbersome constraints on the target community, leading to structural inflexibility. For instance, some nodes within the ground-truth community (such as boundary nodes) may struggle to meet these constraints, causing users to question the model's accuracy \cite{ICS-GNN}. Fortunately, learning-based CS methods can relieve the inflexibility of rule-based CS methods. In particular, one popular methodology is to define the CS problem as a node classification task \cite{ICS-GNN,QD-GNN,coclep,cgnp,TransZero}. These methods employ a two-stage framework. Initially, the model is trained using semi-supervised or unsupervised techniques to generate node embedding vectors. Subsequently, these embedding vectors are utilized to evaluate the probability of each vertex belonging to the target community. Another methodology is to frame the CS problem as a generative task, exemplified by  CommunityAF \cite{CommunityAf}. Specifically, it first utilizes the Graph Neural Networks (GNNs) \cite{GCN} to derive node representations and subsequently employs autoregressive flow to generate the community from the query node. In contrast to node classification models, generative models offer greater flexibility and eliminate concerns regarding community connectivity \cite{CommunityAf}.


However, existing methods rely either entirely on rule-based constraints or solely leverage deep learning techniques for identifying target communities. Thus, a significant inquiry arises: how can deep learning be combined with traditional rule-based constraints to enhance the accuracy of CS? In this paper, we answer this question affirmatively.  Two significant challenges are involved:


\noindent\textit{Challenge 1: How to identify a coarse  candidate community of satisfactory quality?} The most straightforward method is to use the traditional rule-based CS models. However, this technique exhibits significant limitations. First, they only consider the internal cohesiveness and ignore the external sparsity, which is an important factor in measuring the quality of a community \cite{DBLP:journals/pacmmod/KamalB24,DBLP:journals/eswa/HeLYLJW24,DBLP:conf/www/LeskovecLM10}, resulting in unsatisfactory quality. Second, they only optimize the topology and ignore important node attributes, further limiting their performance. Thus, our first challenge is to identify a candidate community that accounts for both the (internal and external) structural proximity and attribute similarity.

\noindent\textit{Challenge 2: How to refine the candidate community?} After identifying the candidate community, it still lacks flexibility and robustness, as it remains a predefined subgraph. Recently, \emph{ALICE} \cite{ALICE} and \emph{TransZero} \cite{TransZero} have used GNNs to refine the candidate community. However, they treat the candidate community as a new graph $G_{new}$ and apply GNNs on $G_{new}$ for node classification tasks. As a result, their methods are heavily dependent on the performance of $G_{new}$, since nodes outside $G_{new}$ have no chance to contribute to the target community, resulting in poor quality (Section \ref{sec:experiments}). Therefore, refining these outside and inside nodes of $G_{new}$ to enhance the candidate community becomes our second challenge.


\stitle{Our Contributions.} To address these challenges, we propose a novel neural community search model via attribute-augmented conductance (named \ourname) which consists of two principal components: the candidate community extraction module and the neural community refinement module. The former leverages the newly-proposed \emph{attributed-augmented conductance} to identify a candidate community of satisfactory quality. Following this, the community refinement module non-trivially utilizes powerful reinforcement learning \cite{ppo1,ppo2,tpro} to iteratively add and remove nodes starting from the candidate community, thereby refining the quality of the candidate community. More concretely, to address Challenge 1,  we resort to the well-known conductance \cite{DBLP:conf/focs/AndersenCL06,DBLP:conf/kdd/KlosterG14,DBLP:conf/www/LeskovecLM10,hk2} to capture the internal cohesion and external sparsity. However, the traditional conductance only models the structural information while neglecting crucial attribute information. To overcome this dilemma, we propose a novel attributed-augmented conductance metric, which seamlessly integrates the topology structure and node attributes. Therefore, within the candidate community extraction module, we initiate the process from the given query node to determine a candidate community with the (most favorable) approximate attribute-augmented conductance. 

For Challenge 2, we design a neural community refinement module that frames the refinement process as a \textit{combinatorial graph optimization} problem (Definition \ref{def:cr}). The core idea is to improve the community quality by removing irrelevant nodes or appending new ones from the community boundaries, optimizing our objective through careful node selection. Unfortunately, the community refinement problem is NP-hard (Theorem \ref{thm:rl}), rendering direct solutions impractical. To address this, we present an efficient community refiner based on reinforcement learning.
Specifically, we cast this problem as a reinforcement learning task and employ the Proximal Policy Optimization (PPO) \cite{ppo1,ppo2} technique for exploration. Besides, we design two distinct policy networks: one for predicting node additions and another for predicting node deletions. Through reinforcement learning, our approach can effectively refine the candidate community. On top of that, we introduce a termination strategy to ensure that the final community structure remains flexible and adaptable.

We conduct extensive experiments on six real-life graphs and ten competitors to evaluate the effectiveness and scalability of the proposed solutions. The empirical results show that our proposed solution can achieve an F1-score improvement ranging from 5.3\% to 42.4\% with comparable runtime in comparison to baselines.


\section{Preliminaries} \label{sec:pro}

\subsection{Problem Definition}
Consider an undirected attributed graph $G=(V, E, \mathcal{F})$, where $V$ is a node set of cardinality $n$, $E$ is an edge set of cardinality $m$, and $\mathcal{F} \in R^{n\times k}$ is an attribute (feature) matrix. In particular, $\mathcal{F}_{uf}=1 $ indicates that node $u$ possesses the attribute $f$, while $\mathcal{F}_{uf}=0 $ signifies that node $u$ does not contain the attribute $f$. We refer to $\mathcal{F}[u] \in R^{1\times k}$ as the attribute vector of node $u$. Let $N(u)=\{v| (u,v) \in E\}$
be the neighbors of node $u$ and $d(u)=|N(u)|$ be the degree of $u$. We use $\boldsymbol{A} \in R^{n \times n}$ to denote the adjacency matrix of $G$, where $\boldsymbol{A}_{uv}=1$ if $(u,v) \in E$, and $\boldsymbol{A}_{uv}=0$ otherwise. Besides, we use $\boldsymbol{D} \in R^{n \times n}$ to represent the degree diagonal matrix of $G$, in which $\boldsymbol{D}_{uu}=d(u)$. Table \ref{table 1} provides the frequently used symbols. 

\begin{table}[t]
\caption{The frequently used symbols.}
	\label{table 1}
    \centering\footnotesize
\begin{tabular}{l|l}
\toprule
\textbf{Symbol} & \textbf{Definitions} \\ 
\midrule
$G=(V,E,\mathcal{F})$ & an undirected graph with the node attribute matrix $\mathcal{F}$ \\
$q$, $C_{q}$ & the query node, the community containing $q$ \\
$\phi_{t}(.)$, $\phi_{a}(.)$& topology-based conductance, attribute-based conductance\\
$\Phi(.)$ & attribute-augmented conductance\\
$L_C$, $L_T$, $\mathcal{L}(\phi)$  &the contrastive loss, triple loss, PPO loss \\
$\mathcal{S},\mathcal{A},\mathcal{P},\mathcal{R},\gamma$  &State, Action, State Transition, Reward, discount factor \\
\bottomrule
\end{tabular}
\end{table}


\begin{definition} [Community Search \cite{k-core2,k-truss3,k-ecc,DMCS,DBLP:journals/pvldb/WuJLZ15,DBLP:conf/sigmod/DaiQC22,DBLP:conf/kdd/YeLLLLW24,DBLP:conf/focs/AndersenCL06,DBLP:conf/kdd/KlosterG14,DBLP:conf/aaai/LinLJ23}] \label{def:cs}
Given an undirected attribute graph $G=(V, E, \mathcal{F})$, a score function $g(.)$ (e.g., minimum degree or conductance), and a query node $q\in V$,  the goal of community search is to identify a vertex set  $C_{q}\subseteq V$ such that (1) $q\in C_{q}$; (2) $C_{q}$ is connected; (3) $g(C_{q})$ is optimal.
\end{definition}

\begin{table}[t]
\caption{A comparison of state-of-the-art community search methods. "SL" represents Supervised Learning, "UL" indicates Unsupervised Learning, "RL" indicates  Reinforcement Learning,  "NC" represents Node Classification, "GR" represents Generation, and "GO" indicates Graph Optimization.}\vspace{-0.3cm}
\centering\footnotesize
\begin{tabular}{lccc}
\toprule 
Methodologies  & \makecell{Internal Cohesiveness\\ \&External Sparsity}& \makecell{Learning\\ Paradigm} & \makecell{Problem\\ Definition} \\
\midrule
Subgraph-based \cite{k-core2,k-truss3,k-ecc} &  $\times$  & $\times$ & $N/A$\\
Density \cite{DBLP:journals/pvldb/WuJLZ15,DBLP:conf/sigmod/DaiQC22,DBLP:conf/kdd/YeLLLLW24} &  $\times$ & $\times$ & $N/A$\\
Modularity \cite{DMCS} &  $\times$  & $\times$ & $N/A$\\
Conductance \cite{DBLP:conf/focs/AndersenCL06,DBLP:conf/kdd/KlosterG14,DBLP:conf/aaai/LinLJ23} &  \Checkmark  & $\times$ & $N/A$\\
\midrule
\emph{ICS-GNN} \cite{ICS-GNN} & $\times$  & SL & NC \\
\emph{QD-GNN} \cite{QD-GNN}  & $\times$  & SL & NC \\
\emph{COCLEP} \cite{coclep}  & $\times$ &  Semi-SL & NC \\
\emph{CGNP} \cite{cgnp}  &  $\times$ &  Semi-SL & NC \\
\emph{TransZero} \cite{TransZero}  & $\times$ &  UL & NC \\
\emph{ALICE} \cite{ALICE} & $\times$ &  SL & NC \\ 
\emph{CommunityAF} \cite{CommunityAf}  & $\times$ &  SL & GR \\
\midrule
\emph{\ourname} (This paper) & \Checkmark &  RL & GO \\
\bottomrule
\end{tabular}\vspace{-0.5cm}
\label{table:cs_comparison}
\end{table}

In this paper, we conceptualize community search as a \emph{graph optimization problem} (Definition \ref{def:cr}) rather than as a binary classification task, as seen in existing approaches like \emph{ICS-GNN} \cite{ICS-GNN} and \emph{TransZero} \cite{TransZero}. At a high level, we first identify a coarse yet high-quality candidate community through a powerful subgraph extraction process, laying the groundwork for subsequent refinement. Building upon the extracted candidate community, we systematically refine it by incorporating additional relevant nodes while removing extraneous ones. This iterative refinement process is designed to optimize community quality, ensuring a more accurate representation of the underlying relationships within the graph. Note that attribute community search (i.e., both attributes and nodes serve as query requests, e.g., \emph{AQD-GNN} \cite{QD-GNN}) and multiple query nodes are orthogonal to ours and fall outside the scope of this paper, presenting potential directions for future research. Table \ref{table:cs_comparison} summarizes the state-of-the-art community search methods. 

\subsection{Graph Neural Network}
Graph Neural Network (GNN) is a specialized neural network structure tailored for processing graph data \cite{GCN,Graphsage,GAT}. It's engineered to encode network nodes as low-dimensional vectors, preserving both the network topology and node feature information, thus facilitating downstream tasks such as node classification and graph clustering \cite{GCN,DBLP:conf/mir/YuLLWOJ24}. GNN governs the information propagation and update process by defining node aggregation and update functions, enabling nodes to assimilate information from their neighbors and refine their representations. This iterative process continues until the model converges: $h_{u}^{(l+1)} =\mathrm{UPDATE}(h_{u}^{(l) },\mathrm{AGGREGATE}(h_{v}^{(l) }|v\in N(u)))$, in which 
$h_{u}^{(l) }$ indicates the representation of vertex $u$ in layer $l$. $\mathrm{AGGREGATE}(\cdot )$ is defined as a node aggregation function that weighs the representation of vertex $v$ w.r.t. vertex $u$. $\mathrm{UPDATE}(\cdot )$ is an update function that iteratively refines the features of all nodes  based on the aggregated information. 

\subsection{Reinforcement Learning} \label{subsec:rl}
Reinforcement learning (RL for short) is used to address the challenge of enabling agents to learn strategies that maximize returns or achieve specific goals through interactions with their environment \cite{ppo1,ppo2,tpro,lei2020reinforcement,deng2021unified,zheng2023dream}.  Typically, the interaction between an agent and its environment is modeled using a Markov Decision Process (MDP), which can be represented by a quintuple $(\mathcal{S},\mathcal{A},\mathcal{P},\mathcal{R},\gamma)$. Here, $\mathcal{S}$ denotes the state space, while $\mathcal{A}$ represents the action space, encompassing the set of viable actions within a given environment. $\mathcal{P}$ signifies the state transition function, with $\mathcal{P}(s'|s,a)$ denoting the probability of transitioning from state $s$ to state $s'$ upon taking action $a$. $\mathcal{R}$ represents the reward function, typically a deterministic function crafted by humans. $\gamma\in [0,1]$ stands for the discount factor, which is a constant. The goal of RL is to learn a policy by maximizing the expected \textit{discounted return} $U_{t}=E[\sum_{i=t}^{\infty}\gamma ^{i-t}\mathcal{R}_{i}]$, in which $\mathcal{R}_{i}$ is the reward for transitioning from state $s_{i-1}$ to $s_{i}$.

\begin{figure*}[t]
    \centering
\includegraphics[width=0.9\linewidth]{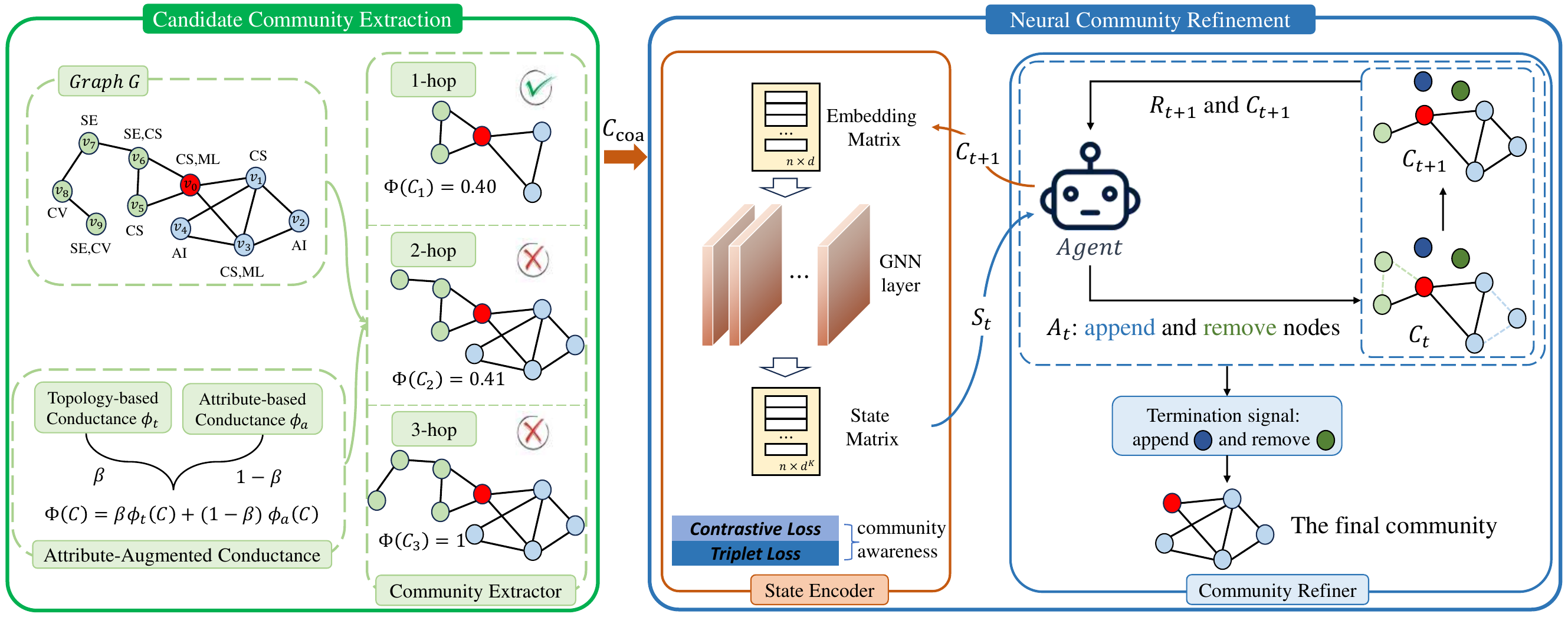}\vspace{-0.3cm}
    \caption{The framework of our proposed \ourname. In the candidate community extraction stage,  \ourname \ systematically extracts a coarse candidate community $C_{coa}$ by leveraging our novel attribute-augmented conductance ( the red node is the query node). Subsequently, we pre-train a state encoder to effectively integrate community awareness information into the state encoding. Finally,  we apply non-trivial reinforcement learning techniques to flexibly refine the coarse candidate community, ultimately yielding high-quality results (best view in color).}\vspace{-0.3cm}
    \label{fig:framework}
\end{figure*}

\section{The Proposed Framework} \label{sec:our}
In this section, we propose an effective community search method, \ourname, as illustrated in  Figure \ref{fig:framework}. Specifically, \ourname \ first utilizes our proposed \emph{attribute-augmented conductance} to adaptively extract a coarse candidate community by identifying the $h$-hop subgraph with optimal attribute-augmented conductance. We highlight that the attribute-augmented conductance simultaneously captures both the topological structure (i.e., internal cohesiveness and external sparsity) and the attributes of nodes. Consequently, the resulting coarse candidate community exhibits comparable community quality (Section \ref{ace}). Subsequently, \ourname \ pre-trains a graph neural network to serve as the state encoder, integrating community awareness into the state representation through our proposed contrastive loss and triplet loss. Once the coarse candidate community is encoded into the initial state by the state encoder, reinforcement learning is employed to initiate the community refinement process. At each step of refinement, the agent processes the community's state encoding and refines the community by incorporating promising nodes while simultaneously excluding noisy or irrelevant ones. The refined community is then re-encoded by the state encoder to generate the next state, enabling the agent to proceed with the subsequent refinement step. This process continues until the predefined termination strategy is satisfied (Section \ref{fcr}).



\section{Candidate Community Extraction} \label{ace}
This section first proposes a novel concept of attribute-augmented conductance to measure the quality of a community, which seamlessly blends structural and attribute information. Then, we devise an efficient dynamic update algorithm to adaptively identify high-quality candidate communities.

\subsection{Attribute-augmented  Conductance}
As previously mentioned, a crucial aspect of community search is the development of an indicator to measure community quality. To this end, we adopt the well-known free-parameter community metric, \emph{conductance}, due to its advantageous structural properties and solid theoretical foundation \cite{DBLP:conf/focs/AndersenCL06,DBLP:conf/kdd/KlosterG14,DBLP:conf/www/LeskovecLM10,hk2,DBLP:conf/aaai/LinLJ23,DBLP:journals/eswa/HeLYLJW24,DBLP:conf/kdd/Lin0WZL24,DBLP:journals/tkde/LinWLLQZ25}. Unlike metrics such as $k$-core \cite{k-core1}, $k$-truss \cite{k-truss1}, $k$-clique \cite{k-clique1}, and modularity \cite{modularity1}, which measure only the internal cohesion of a community, conductance takes into account both internal cohesion and external sparsity. This dual consideration makes conductance a more robust measure of community quality \cite{DBLP:conf/www/LeskovecLM10,11077407,DBLP:journals/pacmmod/KamalB24}.

\begin{definition} [Conductance] \label{def:co}
	Given a graph $G$ and a vertex set $C$, the conductance of $C$ is defined as	follows.
 \begin{equation} \label{conductance}
    \phi (C)=\frac{ \left | \mathrm{cut}(C,\bar{C}) \right | }{\min \{\mathrm{vol}(C),\mathrm{vol}(\bar{C}) \}}.
\end{equation}
\end{definition}
where $\bar{C}$ denotes the complement of the community, defined as $\bar{C} = V \backslash C$. The cut between community $C$ and its complement is represented as $\mathrm{cut}(C, \bar{C}) = \{(u, v) \mid u \in C, v \in \bar{C}\}$, while $\mathrm{vol}(C) = \sum_{u \in C} d(u)$ indicates the total sum of node degrees within community $C$. Therefore, a smaller value of $\phi(C)$ suggests that the number of edges exiting community $C$ is relatively low compared to the number of edges contained within it. Consequently, a lower value of $\phi(C)$ indicates a higher quality of community $C$ \cite{DBLP:conf/focs/AndersenCL06,DBLP:conf/kdd/KlosterG14,DBLP:conf/www/LeskovecLM10,hk2}. A more interesting and useful probabilistic explanation of the conductance is stated as follows.


\begin{theorem} \label{thm:esc}
	 Given a graph $G$ and a community $C$. Consider a random walk $(w_t)_{t\in \mathbb{N}}$ generated by the transition matrix $\boldsymbol{P}(=\boldsymbol{D}^{-1}\boldsymbol{A})$, and the initial state $w_0$ is in $C$ (or $\bar{C}$) and is randomly chosen following the degree distribution. We have 
	 \begin{equation}\small
	 \phi(C)=max\{Pr(w_1 \in \bar{C}| w_0 \in C), Pr(w_1 \in C| w_0 \in \bar{C})\}.
	 \end{equation}
\end{theorem}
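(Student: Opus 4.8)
The plan is to expand both conditional probabilities on the right-hand side directly, using the degree-weighted starting distribution together with the one-step transition law $\boldsymbol{P}=\boldsymbol{D}^{-1}\boldsymbol{A}$, and then to recognize each of them as one of the two ``one-sided'' conductance ratios $|\mathrm{cut}(C,\bar{C})|/\mathrm{vol}(C)$ and $|\mathrm{cut}(C,\bar{C})|/\mathrm{vol}(\bar{C})$. The whole argument is a short conditioning computation, so no heavy machinery is needed.

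First I would make the initial distribution explicit: conditioned on $w_0\in C$, the degree distribution assigns $\Pr(w_0=u\mid w_0\in C)=d(u)/\mathrm{vol}(C)$ for each $u\in C$. Since $\Pr(w_1=v\mid w_0=u)=\boldsymbol{P}_{uv}=\boldsymbol{A}_{uv}/d(u)$, conditioning on the starting vertex gives
\[
\Pr(w_1\in\bar{C}\mid w_0\in C)=\sum_{u\in C}\frac{d(u)}{\mathrm{vol}(C)}\sum_{v\in\bar{C}}\frac{\boldsymbol{A}_{uv}}{d(u)}=\frac{1}{\mathrm{vol}(C)}\sum_{u\in C}\sum_{v\in\bar{C}}\boldsymbol{A}_{uv}=\frac{|\mathrm{cut}(C,\bar{C})|}{\mathrm{vol}(C)}.
\]
The key point is that the factor $d(u)$ from the initial distribution cancels the $1/d(u)$ normalization of the walk, so the degree factors drop out and only the count of boundary edges remains.

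By the identical computation with $C$ and $\bar{C}$ interchanged I would obtain $\Pr(w_1\in C\mid w_0\in\bar{C})=|\mathrm{cut}(C,\bar{C})|/\mathrm{vol}(\bar{C})$, where I use that $G$ is undirected so that the edge set $\mathrm{cut}(\bar{C},C)$ coincides with $\mathrm{cut}(C,\bar{C})$. Finally, taking the maximum of two positive fractions with equal numerators amounts to dividing by the smaller of the two denominators, so
\[
\max\bigl\{\Pr(w_1\in\bar{C}\mid w_0\in C),\,\Pr(w_1\in C\mid w_0\in\bar{C})\bigr\}=\frac{|\mathrm{cut}(C,\bar{C})|}{\min\{\mathrm{vol}(C),\mathrm{vol}(\bar{C})\}}=\phi(C),
\]
by Definition \ref{def:co}. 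There is no genuine obstacle here; the only points that need care are stating the starting distribution precisely (degree-proportional within whichever side $w_0$ lies in) and invoking undirectedness to identify the two cut sets, and one should implicitly assume $\mathrm{vol}(C),\mathrm{vol}(\bar{C})>0$ so that the ratios are well defined.
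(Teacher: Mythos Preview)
Your proposal is correct and follows essentially the same approach as the paper: expand each conditional probability using the degree-proportional start and the transition kernel $\boldsymbol{P}=\boldsymbol{D}^{-1}\boldsymbol{A}$, observe the cancellation of the $d(u)$ factors to obtain $|\mathrm{cut}(C,\bar{C})|/\mathrm{vol}(C)$ and $|\mathrm{cut}(C,\bar{C})|/\mathrm{vol}(\bar{C})$, and then use $\max\{a/x,a/y\}=a/\min\{x,y\}$ to recover Definition~\ref{def:co}. The paper's proof is the same computation written in a slightly different order (it first rewrites $\phi(C)$ as the max of the two ratios and then matches each ratio to the corresponding escape probability), so there is no substantive difference.
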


\begin{proof}
Firstly, according to Eq.  ~\eqref{conductance}, we can deduce that
	 $\phi (C)=\frac{ \left | \mathrm{cut}(C,\bar{C}) \right | }{\min\{\mathrm{vol}(C),\mathrm{vol}(\bar{C})\}} = \max\{\frac{ \left | \mathrm{cut}(C,\bar{C}) \right | }{\mathrm{vol}(C)},\frac{ \left | \mathrm{cut}(C,\bar{C}) \right | }{\mathrm{vol}(\bar{C})} \}$.
Then, we have $Pr(w_1 \in \bar{C}| w_0 \in C) = \sum\limits_{u \in C}\sum\limits_{v \in \bar{C}}\frac{d(u)}{\mathrm{vol}(C)}\cdot \boldsymbol{P}_{uv}$ $=\sum\limits_{u \in C}\sum\limits_{v \in \bar{C}\cap N(u)}\frac{d(u)}{\mathrm{vol}(C)}\cdot \frac{1}{d(u)} $
$=\sum\limits_{u \in C}\sum\limits_{v \in \bar{C}\cap N(u)}\frac{1}{\mathrm{vol}(C)}$
$= \frac{ \left | \mathrm{cut}(C,\bar{C}) \right | }{\mathrm{vol}(C)}$. Here, $\frac{d(u)}{\mathrm{vol}(C)}$ denotes the probability of randomly selecting node $u$ from community $C$ following the degree distribution, while $\boldsymbol{P}_{uv}$ represents the probability that node $u$ transitions to node $v$ by one-hop random walk. 
Analogously, we have $Pr(w_1 \in C| w_0 \in \bar{C}) = \frac{ \left | \mathrm{cut}(C,\bar{C}) \right | }{\mathrm{vol}(\bar{C})}$.
Consequently, the theorem is proved. \noindent\hspace*{\fill}$\blacksquare$
\end{proof}

Theorem \ref{thm:esc} offers a random walk interpretation of conductance. 
Specifically, the conductance of the community $C$ is the probability that a random walker, beginning within $C$ (or $\bar{C}$), will escape to the outside by one-hop random walk. To distinguish it from the attribute-based  conductance discussed later, we redefine conductance as topology-based conductance, denoted by $\phi_t(C)$. 
However,  $\phi_t(C)$ falls short in this context, as it fails to consider the attributes associated with the nodes. Inspired by the random walk interpretation of conductance, we propose the attribute-based conductance to capture structural proximity and attribute similarity.

\begin{definition} [Attribute Edge] \label{def:ae}
Given an attributed graph $G=(V, E, \mathcal{F})$, the connection between nodes $u$ and $v$ through a specific attribute $f$ is considered an \textit{attribute edge}, denoted as $(u, v, f)$. Namely, $(u, v, f)$ indicates nodes $u$ and $v$ share attribute $f$ (i.e., , $\mathcal{F}_{uf}$=1 and  $\mathcal{F}_{vf}$=1).
\end{definition}

\begin{definition} [Attribute Degree] \label{def:ad}
Given an attributed graph $G=(V, E, \mathcal{F})$, the attribute degree of a node $u$, denoted as $d_{a}(u)$, is defined as the number of attribute edges connected to $u$. The formula for its calculation is presented as follows:
\begin{equation}\label{eq:ad}
d_{a}(u)=\sum_{v\in V\backslash \{u\}}^{} \mathcal{F}[u]\cdot \mathcal{F}[v]^\top.
\end{equation} 
\end{definition}

Definitions \ref{def:ae} and \ref{def:ad} map the original attributed graph  into a multigraph (i.e., two vertices can be connected by more than one edge), as depicted in Figure \ref{fig:example}. In this multigraph, the edges correspond to attribute edges, and the degree of each node reflects its attribute degree. Unfortunately, the multigraph may be extremely dense (even close to quadratic), especially if there is an attribute value shared by a decent portion, e..g, 50\%, of the nodes in the graph. Section \ref{subsec:ce} will show how to quickly obtain attribute-augmented
conductance without materializing the multigraph.


\begin{definition} [Attribute-based Random Walk] \label{def:arw}
Given an attribute graph $G$, the attribute-based transition probability from node $u$ to node $v$ is defined as follows:
\begin{equation}\label{eq:arw}
P^a_{uv}=\frac{\mathcal{F}[u] \cdot \mathcal{F}[v]^\top}{d_{a}(u)}.
\end{equation} 
\end{definition}

\begin{figure}[t]
    \centering
\includegraphics[width=0.9\linewidth]{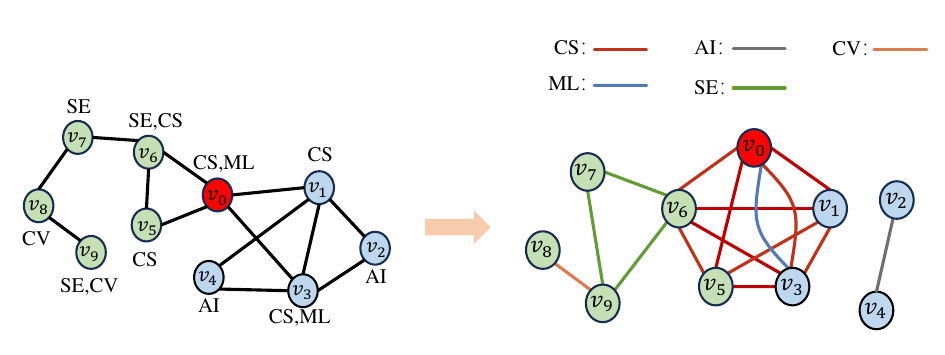}
    \vspace{-0.5cm}
    \caption{From the original attribute graph to the multigraph.}
    \label{fig:example}
\vspace{-5mm}
\end{figure} 

Intuitively, $P^a_{uv}$ models the transition probability from $u$ to $v$ based on the attribute set of $G$.

\begin{example}
Consider the example of nodes $v_{0}$ and $v_{3}$ in Figure \ref{fig:example}. $v_{0}$ is connected to $v_{3}$ through two attribute edges, specifically $(v_{0},v_{3},CS)$ and $(v_{0},v_{3},ML)$. Besides, $v_{0}$ also has three other attribute edges, namely, $(v_{0},v_{1},CS)$, $(v_{0},v_{5},CS)$, and $(v_{0},v_{6},CS)$. Thus, the attribute degree of node $v_{0}$ equals 5 and  the attribute transition probability from $v_{0}$ to $v_{3}$ is $P^{a}_{v_{0}v_{3}}=\frac{2}{5}=0.4$.
\end{example}

\begin{definition} [Attribute-based Conductance] \label{def:abc}
Given an attributed graph $G$ and a community \ $C$, the attribute-based conductance, denoted as $\phi_a (C)$, is defined as follows:
\begin{equation} \label{eq:abc}
\phi_a (C)=\frac{ \left | \mathrm{cut}_a(C,\bar{C}) \right | }{\min \{\mathrm{vol}_a(C),\mathrm{vol}_a(\bar{C}) \}}.
\end{equation}
Here, $\mathrm{cut}_a(C,\bar{C})=\{(v_i,v_j,f)\mid v_i\in C, v_j\in \bar{C}\}$ denotes the number of attribute edges connecting community $C$ and $\bar{C}$, while $\mathrm{vol}_a(C)=\sum_{u\in C}^{}d_a(u)$ represents the total sum of attribute degrees within $C$.
\end{definition}

We highlight that while the definitions of attribute-based conductance \(\phi_a (C)\) and topology-based conductance \(\phi_t (C)\) are similar, computing \(\phi_a (C)\) is significantly more complex (details in Section \ref{subsec:ce}).  Besides, analogous to Theorem \ref{thm:esc}, we can also understand attribute-based conductance from the perspective of attribute-based random walk, thereby enhancing its practical applicability. Specifically, we reformulate   Eq.~\eqref{eq:abc} as $\phi_a (C)=max\{Pr(\omega_1 \in \bar{C}| \omega_0 \in C), Pr(\omega_1 \in C| \omega_0 \in \bar{C})\}$, where $(\omega_t)_{t\in \mathbb{N}}$ represents a random walk generated by the attribute-based transition matrix $P^a$ (i.e., Eq. \ref{eq:arw}). Therefore, a smaller \(\phi_a (C)\) indicates a lower probability of transitioning from community \(C\) (or \(\bar{C}\)) to \(\bar{C}\) (or \(C\)) through the one-hop attribute-based random walk. Intuitively, this suggests that the nodes within the community largely share similar attributes, while there is limited overlap with the attributes of nodes outside the community. This observation aligns with our principles of internal attribute cohesiveness and external attribute sparsity.


Building on the concepts discussed earlier, we now introduce the  definition of attribute-augmented  conductance, which can capture simultaneously the topology and the attributes of the community.

\begin{definition} [Attribute-augmented  Conductance] \label{def:aec}
Given an attributed graph $G$, a community $C$, and a  preference parameter $\beta$,  the attribute-augmented  conductance is as follows:
\begin{equation} \label{eq:aec}
\Phi(C)=\beta \cdot \phi_t (C) + (1-\beta) \cdot \phi_a (C).
\end{equation}
\end{definition}

\stitle{Remark.} $\Phi(C)$ is constructed as a combination of $\phi_t (C)$ and $\phi_a (C)$ weighted by preference parameter $\beta$. This formulation captures a trade-off between topological structure and node attributes, allowing us to identify communities with distinct characteristics by varying $\beta$. In particular, when $\beta>0.5$, the community emphasizes the interaction edges among vertices. Conversely, when $\beta<0.5$, the focus shifts toward the attribute similarity among vertices.

\subsection{Adaptive Community Extractor} \label{subsec:ce}



Local search \cite{huang2014querying,k-core2,yao2021efficient,k-clique1,DMCS} is a well-established technique for identifying the community to which a query node belongs, utilizing an iterative strategy that incrementally adds nodes in a designed ordering. We emphasize that directly obtaining the optimal solution through this ordering is generally NP-hard \cite{k-core2}, and it becomes even more complex when considering conductance \cite{DBLP:conf/aaai/LinLJ23}, ultimately resulting in poor scalability. Fortunately, our principal aim is to efficiently derive a coarse candidate community with satisfactory quality, which will subsequently be refined through our proposed novel and non-trivial reinforcement learning techniques. To this end, we let $N^{h}(q)=\{v| dist(q,v) \leq h\}$ be the $h$-hop neighbors of the query node $q$, where $dist(q,v)$ represents the shortest-path distance between nodes $q$ and $v$. We designate $N^{h}(q)$ as the  candidate community, which has been well-supported by existing literature \cite{huang2014querying,yao2021efficient,k-clique1,DMCS}. Intuitively, nodes that are closer to the query node are more likely to belong to the target community. This approach implicitly defines the order of node addition (i.e., by adding based on distance in batches). 

\begin{example}
In Figure \ref{fig:framework}, we use $v_0$ as the query node, representing 1-hop, 2-hop, and 3-hop subgraphs as $C_1$, $C_2$, and $C_3$, respectively. We first compute the topology-based conductance values, obtaining $\phi_t(C_1)=0.55$, $\phi_t(C_2)=0.33$, and $\phi_t(C_3)=0.55$. From these values, we select $C_2$ as the coarse community by only using the topology-based conductance. Next, when incorporating attributes, the attribute-based conductance values are calculated as $\phi_a(C_1)=0.25$, $\phi_a(C_2)=0.5$, and $\phi_a(C_3)=1$. By setting $\beta=0.5$, indicating equal weighting of both topological structure and node attribute, the attribute-augmented  conductance values are adjusted to $\Phi(C_1)=0.40$, $\Phi(C_2)=0.415$, and $\Phi(C_3)=0.775$. In this case, $C_1$ becomes the coarse community, as it better aligns with the community definition compared to $C_2$.
\end{example}

\stitle {A failed attempt.}  According to Definition \ref{def:aec}, a naive approach to calculating attribute-augmented conductance is first to construct a multigraph (Figure \ref{fig:example}). Subsequently, it uses Eq. \ref{eq:abc} to obtain the attribute-based conductance and then applies Eq. \ref{eq:aec} to derive the attribute-augmented conductance. However, this naive solution has two-fold drawbacks: (1)  Materializing the multigraph requires prohibitively high time and space complexities. In particular, it takes $O(kn^2)$ time and $O(kn^2)$ space to materialize the multigraph by Eq. \ref{eq:ad}, resulting in poor scalability. Where $n$ and $k$ are the number of nodes and attributes of $G$, respectively. (2) When the subgraph $N^{h}(q)$ extends to $N^{h+1}(q)$, calculating the attribute-augmented  conductance of $N^{h+1}(q)$ from scratch entails significant redundant computations. Consequently, 
the naive method cannot deal with massive graphs with millions of edges (details in Section \ref{subsec:runtime}). Thus,  an important challenge is quickly calculating and updating attribute-augmented conductance without materializing the multigraph.

\stitle{Several in-depth observations.} Analysis of Eq. \ref{eq:aec} reveals that the main bottlenecks in calculating and updating attribute-augmented  conductance lie in the efficient computation of $\mathrm{vol}_a(C)$ and the rapid update of $\left | \mathrm{cut}_a(C,\bar{C}) \right |$. To overcome this challenge, we observe that a simple matrix-vector product can obtain the attribute degrees of all nodes. Besides, we  can also quickly and dynamically update  $| \mathrm{cut}_a(C,\bar{C})|$ by simply maintaining several arrays. For convenience, we state these observations as the following lemmas.

\begin{lemma}\label{lem:att_vol}
Consider an attributed graph $G=(V, E, \mathcal{F})$, for any node $u \in V$, the attribute degree of $u$ can be reformatted as $d_{a}(u)=\mathcal{F}[u] \cdot (\mathbf{1}_{n}\mathcal{F}-\mathbf{1}_{k})^\top$, in which $\mathbf{1}_{n} \in R^{1 \times n}$ (or $\mathbf{1}_{k} \in R^{1 \times k}$) is a vector with all elements equal to 1. Besides, we have $d_{a}=\mathcal{F} \cdot (\mathbf{1}_{n}\mathcal{F}-\mathbf{1}_{k})^\top$.
\end{lemma}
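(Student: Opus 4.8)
The plan is to unfold the definition of attribute degree from Eq.~\eqref{eq:ad} and show that the summation $\sum_{v \in V \setminus \{u\}} \mathcal{F}[u] \cdot \mathcal{F}[v]^\top$ can be rewritten as a single matrix-vector product. The key algebraic observation is that $\mathcal{F}[u] \cdot \mathcal{F}[v]^\top$ counts the number of attributes shared by $u$ and $v$, so summing over all $v \neq u$ counts, for each attribute $f$ that $u$ possesses, the number of \emph{other} nodes that also possess $f$. This is exactly the column sum of $\mathcal{F}$ restricted to attributes of $u$, minus a correction for $u$ itself.

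First I would observe that $\sum_{v \in V} \mathcal{F}[v] = \mathbf{1}_n \mathcal{F} \in R^{1 \times k}$ is the vector whose $f$-th entry is the total number of nodes possessing attribute $f$. Therefore $\sum_{v \in V} \mathcal{F}[u] \cdot \mathcal{F}[v]^\top = \mathcal{F}[u] \cdot (\mathbf{1}_n \mathcal{F})^\top$. Next I would account for the excluded term $v = u$: since $\mathcal{F}$ is a $0/1$ matrix, $\mathcal{F}[u] \cdot \mathcal{F}[u]^\top = \sum_f \mathcal{F}_{uf}^2 = \sum_f \mathcal{F}_{uf} = \mathcal{F}[u] \cdot \mathbf{1}_k^\top$. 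Subtracting this from the full sum gives
\begin{equation*}
d_a(u) = \mathcal{F}[u] \cdot (\mathbf{1}_n \mathcal{F})^\top - \mathcal{F}[u] \cdot \mathbf{1}_k^\top = \mathcal{F}[u] \cdot (\mathbf{1}_n \mathcal{F} - \mathbf{1}_k)^\top,
\end{equation*}
which is the first claimed identity. The vectorized form $d_a = \mathcal{F} \cdot (\mathbf{1}_n \mathcal{F} - \mathbf{1}_k)^\top$ then follows by stacking this identity over all rows $u \in V$, i.e., by replacing $\mathcal{F}[u]$ with the full matrix $\mathcal{F}$ on the left.

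I do not expect any genuine obstacle here; the result is essentially a bookkeeping identity. The only point requiring a moment of care is the self-loop correction term: one must note that the definition in Eq.~\eqref{eq:ad} sums over $v \in V \setminus \{u\}$, not over all of $V$, so the $\mathbf{1}_k$ subtraction is precisely what removes the spurious contribution of $u$ pairing with itself (which would otherwise add $|\{f : \mathcal{F}_{uf} = 1\}|$ to the count). It is also worth being explicit that the identity $\mathcal{F}_{uf}^2 = \mathcal{F}_{uf}$ relies on $\mathcal{F}$ being binary, as stipulated in the problem setup.
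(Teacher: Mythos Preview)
Your proposal is correct and follows essentially the same argument as the paper's proof: both unfold Eq.~\eqref{eq:ad}, pull $\mathcal{F}[u]$ out of the sum, rewrite $\sum_{v\in V}\mathcal{F}[v]^\top=(\mathbf{1}_n\mathcal{F})^\top$, and then replace the self-term $\mathcal{F}[u]\cdot\mathcal{F}[u]^\top$ by $\mathcal{F}[u]\cdot\mathbf{1}_k^\top$ using the binary nature of $\mathcal{F}$. Your explicit remark that $\mathcal{F}_{uf}^2=\mathcal{F}_{uf}$ is exactly the justification the paper invokes when it says ``every element in $\mathcal{F}[u]$ is either 0 or 1.''
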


\begin{proof}
    By Eq. \ref{eq:ad}, we have  $d_{a}(u)=\sum_{v\in V\backslash \{u\}}^{} \mathcal{F}[u]\cdot \mathcal{F}[v]^\top=\mathcal{F}[u]\cdot\sum_{v\in V\backslash \{u\}}^{}  \mathcal{F}[v]^\top=\mathcal{F}[u] \cdot ((\mathbf{1}_{n}\mathcal{F})^\top-\mathcal{F}[u]^\top)=\mathcal{F}[u] \cdot (\mathbf{1}_{n}\mathcal{F}-\mathcal{F}[u])^\top=\mathcal{F}[u] \cdot (\mathbf{1}_{n}\mathcal{F}-\mathbf{1}_{k})^\top$. This is because every element in $\mathcal{F}[u]$ is either 0 or 1, $\mathcal{F}[u]\cdot \mathcal{F}[u]^\top= \mathcal{F}[u]\cdot \mathbf{1}_{k}^\top$. Thus, this lemma is proved. \noindent\hspace*{\fill}$\blacksquare$
\end{proof}

\begin{lemma} \label{lem:dynamic_cut}
For a community $C$ and any $u \in \bar{C}$,  we have $\mathrm{cut}(C\cup \{u\}, V\setminus (C\cup \{u\}))=\left | \mathrm{cut}(C,\bar{C}) \right |+d(u)-2\left | \mathrm{cut}(u,C) \right | $ and $\mathrm{cut}_a(C\cup \{u\}, V\setminus (C\cup \{u\}))=\left | \mathrm{cut}_a(C,\bar{C}) \right |+d_a(u)-2\left | \mathrm{cut}_a(u,C) \right | $. 
\end{lemma}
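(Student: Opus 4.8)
The plan is to prove the two identities by a direct edge-counting argument, treating the topology case and the attribute case in parallel since they have identical combinatorial structure (the only difference being whether we count ordinary edges or attribute edges). Fix a community $C$ and a node $u \in \bar{C}$, and write $C' = C \cup \{u\}$ and $\bar{C'} = V \setminus C'$. The key observation is that the cut edges of $C'$ and the cut edges of $C$ differ only in edges incident to $u$, so I would classify every edge incident to $u$ into three types: edges from $u$ into $C$, edges from $u$ to some other node in $\bar{C}$ (i.e.\ in $\bar{C'}$), and — only in the attribute/multigraph case — this still works since attribute edges incident to $u$ partition the same way.

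First I would handle the topology identity. Starting from $\mathrm{cut}(C,\bar{C})$, observe that moving $u$ from $\bar{C}$ to $C$ changes the cut as follows: every edge $(u,v)$ with $v \in C$ was previously a cut edge (since $u \in \bar{C}$, $v \in C$) but is no longer a cut edge in $C'$, so we lose $|\mathrm{cut}(u,C)|$ such edges; every edge $(u,v)$ with $v \in \bar{C'}$ was previously not a cut edge (both endpoints in $\bar{C}$) but becomes a cut edge in $C'$, so we gain $|\mathrm{cut}(u,\bar{C'})|$ such edges; all other edges of $G$ are unaffected. Hence $|\mathrm{cut}(C',\bar{C'})| = |\mathrm{cut}(C,\bar{C})| - |\mathrm{cut}(u,C)| + |\mathrm{cut}(u,\bar{C'})|$. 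Now use $|\mathrm{cut}(u,C)| + |\mathrm{cut}(u,\bar{C'})| = d(u)$ (every edge incident to $u$ goes either into $C$ or into $\bar{C'}$, since $u \notin \bar{C'}$ and these two sets partition $V \setminus \{u\} \supseteq C$; note $V \setminus \{u\} = C \sqcup \bar{C'}$), so $|\mathrm{cut}(u,\bar{C'})| = d(u) - |\mathrm{cut}(u,C)|$. Substituting gives $|\mathrm{cut}(C',\bar{C'})| = |\mathrm{cut}(C,\bar{C})| + d(u) - 2|\mathrm{cut}(u,C)|$, as claimed.

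Then I would repeat the argument verbatim for attribute edges: each attribute edge $(u,v,f)$ incident to $u$ either has $v \in C$ or $v \in \bar{C'}$, the count of the former is $|\mathrm{cut}_a(u,C)|$ and together they sum to $d_a(u)$ by Definition~\ref{def:ad}; moving $u$ into $C$ removes the $|\mathrm{cut}_a(u,C)|$ attribute edges into $C$ from the attribute cut and adds the $d_a(u) - |\mathrm{cut}_a(u,C)|$ attribute edges into $\bar{C'}$, yielding $|\mathrm{cut}_a(C',\bar{C'})| = |\mathrm{cut}_a(C,\bar{C})| + d_a(u) - 2|\mathrm{cut}_a(u,C)|$. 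The argument is structurally identical because the multigraph picture (Figure~\ref{fig:example}) makes attribute edges behave exactly like ordinary edges for the purpose of cut counting.

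I do not anticipate a serious obstacle here — the result is essentially a bookkeeping identity — but the one point that deserves care is making the partition $V \setminus \{u\} = C \sqcup \bar{C'}$ explicit and invoking it correctly, since this is what lets me rewrite the "gained" edges as $d(u) - |\mathrm{cut}(u,C)|$ (resp.\ $d_a(u) - |\mathrm{cut}_a(u,C)|$) and thereby produce the $-2|\mathrm{cut}(u,C)|$ term. I would also note explicitly that in the attribute case $d_a(u)$ counts attribute edges with multiplicity (an attribute edge $(u,v,f)$ for each shared attribute $f$), which is consistent with how $\mathrm{cut}_a$ and $\mathrm{vol}_a$ are defined in Definitions~\ref{def:ae}--\ref{def:abc}, so no discrepancy arises when the two endpoints share several attributes.
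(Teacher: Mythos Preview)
Your proposal is correct and follows essentially the same approach as the paper: both decompose the new cut as the old cut minus the edges from $u$ into $C$ plus the edges from $u$ into the remaining complement, then use $|\mathrm{cut}(u,C)| + |\mathrm{cut}(u,\bar{C'})| = d(u)$ (and its attribute analogue) to produce the $-2|\mathrm{cut}(u,C)|$ term. Your treatment is in fact slightly more careful notationally, making the partition $V\setminus\{u\} = C \sqcup \bar{C'}$ explicit.
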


\begin{proof}
    By Definition \ref{def:co}, we have $\mathrm{cut}(C\cup \{u\}, V\setminus (C\cup \{u\}))=\left | \mathrm{cut}(C,\bar{C}) \right |-\left | \mathrm{cut}(u,C) \right | +\left | \mathrm{cut}(u,\bar{C}) \right |=\left | \mathrm{cut}(C,\bar{C}) \right |+d(u)-2\left | \mathrm{cut}(u,C) \right |$ due to $\left | \mathrm{cut}(u,C) \right |+\left | \mathrm{cut}(u,\bar{C}) \right |=d(u)$. Similarly, we can prove $\mathrm{cut}_a(C\cup \{u\}, V\setminus (C\cup \{u\}))=\left | \mathrm{cut}_a(C,\bar{C}) \right |+d_a(u)-2\left | \mathrm{cut}_a(u,C) \right | $. Thus, this lemma is proved. \noindent\hspace*{\fill}$\blacksquare$
\end{proof}


\begin{fact} \label{fact:uaec}
Given a community $C$ and any $u \in \bar{C}$, based on Lemma \ref{lem:dynamic_cut}, we have the following facts.   
\begin{equation}\scriptsize\label{eq:utc}
\phi_t(C\cup \{u\})=\frac{ \left | \mathrm{cut}(C,\bar{C}) \right |+d(u)-2\left | \mathrm{cut}(u,C) \right | }{\min \{\mathrm{vol}(C)+d(u),2m-\mathrm{vol}(C)-d(u) \}}.
\end{equation}
\begin{equation} \scriptsize\label{eq:uabc}
\phi_a(C\cup \{u\})=\frac{| \mathrm{cut}_a(C,\bar{C})|+d_a(u)-2| \mathrm{cut}_a(u,C)|}{\min \{\mathrm{vol}_a(C)+d_a(u),\mathrm{vol}_a(V)-\mathrm{vol}_a(C)-d_a(u)\}}.
\end{equation}
\end{fact}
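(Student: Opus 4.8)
The plan is to obtain both identities by a direct substitution: feed the cut decompositions of Lemma~\ref{lem:dynamic_cut} into the definitions of conductance (Definition~\ref{def:co}) and attribute-based conductance (Definition~\ref{def:abc}), after rewriting the volume terms of $C\cup\{u\}$ in terms of those of $C$ and of $u$ alone.

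First I would handle the topology-based case. Writing Definition~\ref{def:co} for the enlarged vertex set gives
\[
\phi_t(C\cup\{u\}) = \frac{\left|\mathrm{cut}(C\cup\{u\},\,V\setminus(C\cup\{u\}))\right|}{\min\{\mathrm{vol}(C\cup\{u\}),\ \mathrm{vol}(V\setminus(C\cup\{u\}))\}}.
\]
For the numerator I would invoke Lemma~\ref{lem:dynamic_cut} verbatim, which already asserts $\left|\mathrm{cut}(C\cup\{u\},V\setminus(C\cup\{u\}))\right| = \left|\mathrm{cut}(C,\bar C)\right| + d(u) - 2\left|\mathrm{cut}(u,C)\right|$. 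For the denominator I would use the hypothesis $u\in\bar C$, so the degree of $u$ is added disjointly, $\mathrm{vol}(C\cup\{u\}) = \mathrm{vol}(C) + d(u)$; then the handshake identity $\mathrm{vol}(V)=\sum_{v\in V}d(v)=2m$ yields $\mathrm{vol}(V\setminus(C\cup\{u\})) = \mathrm{vol}(V) - \mathrm{vol}(C\cup\{u\}) = 2m - \mathrm{vol}(C) - d(u)$. Substituting both into the display gives Eq.~\eqref{eq:utc}.

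Next I would repeat the argument for the attribute-based conductance, now starting from Definition~\ref{def:abc} (Eq.~\eqref{eq:abc}). The numerator is supplied by the second half of Lemma~\ref{lem:dynamic_cut}: $\left|\mathrm{cut}_a(C\cup\{u\},V\setminus(C\cup\{u\}))\right| = \left|\mathrm{cut}_a(C,\bar C)\right| + d_a(u) - 2\left|\mathrm{cut}_a(u,C)\right|$. Since again $u\notin C$, attribute degrees add disjointly, so $\mathrm{vol}_a(C\cup\{u\}) = \mathrm{vol}_a(C) + d_a(u)$ and $\mathrm{vol}_a(V\setminus(C\cup\{u\})) = \mathrm{vol}_a(V) - \mathrm{vol}_a(C) - d_a(u)$; here I would deliberately keep $\mathrm{vol}_a(V)$ in symbolic form rather than expanding it into twice the number of attribute edges, consistent with the goal of never materializing the multigraph. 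Plugging these into Eq.~\eqref{eq:abc} yields Eq.~\eqref{eq:uabc}.

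There is no genuine obstacle here: the statement is a corollary of Lemma~\ref{lem:dynamic_cut} together with two pieces of volume bookkeeping. The only points that warrant an explicit word are (i) the assumption $u\in\bar C$, which is precisely what lets $d(u)$ and $d_a(u)$ be added without double counting in the volume terms, and (ii) recording $\mathrm{vol}(V)=2m$ in the topological case while leaving $\mathrm{vol}_a(V)$ unexpanded in the attribute case --- this is what makes the two denominators take the asymmetric-looking shapes $2m - \mathrm{vol}(C) - d(u)$ versus $\mathrm{vol}_a(V) - \mathrm{vol}_a(C) - d_a(u)$, and I would flag it so the reader is not puzzled by the discrepancy.
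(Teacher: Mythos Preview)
Your proposal is correct and mirrors the paper's treatment: the paper itself presents Fact~\ref{fact:uaec} as an immediate consequence of Lemma~\ref{lem:dynamic_cut} with no separate proof, and your write-up simply makes explicit the substitution into Definitions~\ref{def:co} and~\ref{def:abc} together with the volume bookkeeping $\mathrm{vol}(C\cup\{u\})=\mathrm{vol}(C)+d(u)$ and $\mathrm{vol}(V)=2m$ (and their attribute analogues).
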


\begin{algorithm}[t]\small
\caption{\textit{Adaptive Community Extractor}}
\label{PCE}
\begin{algorithmic}[noline]
\State \textbf{Input}: An attributed graph $G=(V, E, \mathcal{F})$, a query node $q$, and a parameter $\beta$. 
\State \textbf{Output}: The coarse candidate community $C$.
\end{algorithmic}
    \begin{algorithmic}[1] 
   \State  $d_{a} \leftarrow \mathcal{F} \cdot (\mathbf{1}_{n}\mathcal{F}-\mathbf{1}_{k})^\top$; $\mathrm{vol}_{a}(V) \leftarrow \sum_{v\in V}d_a(v)$
     \State $\mathrm{cut}\gets d(q)$; $\mathrm{vol}\gets d(q)$; $\mathrm{cut}_a\gets d_a(q)$; $\mathrm{vol}_a\gets d_a(q)$
      \State $C \leftarrow \{q\}$; $tmp \leftarrow \{q\}$; $\widehat{\Phi} \leftarrow 1$; $att \gets \mathcal{F}[q]$
            \For {$h=1$ to $\max \{dist(q,u)| u \in G\}$}
            \State $\Delta(h)\leftarrow N^{h}(q)-N^{h-1}(q)$
            \For {each $u\in \Delta(h)$}
            \State $tmp.add(u)$
            \State $\mathrm{cut}\gets \mathrm{cut}+d(u)-2\cdot |C\cap N(u)|$, $vol\gets vol+d(u)$
            \State $\mathrm{cut}_a\gets \mathrm{cut}_a+d_a(u)-2\cdot \mathcal{F}[u]\cdot att^T$, $vol_a\gets vol_a+d_a(u)$
            \State $att\gets att+\mathcal{F}[u]$
            \EndFor
            \If{$\frac{\beta \cdot \mathrm{cut}}{\min \{\mathrm{vol},2m-\mathrm{vol}\}} + \frac{(1-\beta) \cdot \mathrm{cut}_a}{\min \{\mathrm{vol}_a,\mathrm{vol}_a(V)-\mathrm{vol}_a\}}<\widehat{\Phi}$} 
            \State $\widehat{\Phi}\gets \frac{\beta \cdot \mathrm{cut}}{\min \{\mathrm{vol},2m-\mathrm{vol}\}} + \frac{(1-\beta) \cdot \mathrm{cut}_a}{\min \{\mathrm{vol}_a,\mathrm{vol}_a(V)-\mathrm{vol}_a\}}$
            \State $C\gets tmp$
            \EndIf
            \EndFor
            \State \Return $C$
	\end{algorithmic}
\end{algorithm}

By Fact \ref{fact:uaec}, we can know that the challenge with incremental computation of $\phi_{t}$ (or $\phi_a$) is how to efficiently maintain $|\mathrm{cut}(u,C)|$ (or $|\mathrm{cut}_a(u,C)|$). Note that $|\mathrm{cut}(u,C)|=|N(u)\cap C|$ is easily obtained in $O(d(u))$ time. However, calculating \( |\mathrm{cut}_a(u, C)| \) is more complicated. This is due to the fact that $
|\mathrm{cut}_a(u, C)| = |\{(u, v, f) \mid v \in C\}| = \sum_{v \in C} \mathcal{F}[u] \mathcal{F}[v]^\top$ as defined in Definition \ref{def:abc}, which requires \( O(k |C|) \) time to compute. To improve efficiency, we devise several non-trivial data structures to efficiently update  $\phi_{t}$ and  $\phi_a$.


\stitle{Implementation details.}  Algorithm \ref{PCE} provides the pseudo-code of our adaptive community extractor. Specifically, in Lines 1-3, it uses Lemma \ref{lem:att_vol} to obtain the attribute degree vector $d_a$ and initializes the candidate community $C$, current search space $tmp$, and internal attribute $att$. Then, by Lemma \ref{lem:dynamic_cut}, it executes the incremental dynamic update process for $\mathrm{cut}$, $\mathrm{vol}$, $\mathrm{cut}_a$, and $\mathrm{vol}_a$ (Lines 5-10). Lines 11-13 update the candidate community $C$ and the optimal value $\widehat{\Phi}$. Finally, Line 14 returns $C$ as the candidate community.

\begin{example} \label{alg_example}
Reconsider Figure \ref{fig:example}, Figure \ref{fig:algorithm} illustrates the process of extending from the query node $v_0$ to $N^1(v_0)$. To demonstrate the dynamic update algorithm, we focus on the addition of node $v_3$ (i.e., the second step). Initially, the community $C$ consists of nodes $v_0$ and $v_1$. The neighbors of node $v_3$ include $v_0$, $v_1$, $v_2$, and $v_4$, with two edges connecting $v_3$ to community $C$. Consequently, $|\mathrm{cut}(v_3,C)|=2$. Then $\mathrm{cut}$ is updated to $\mathrm{cut}=6+4-2\times 2=6$, while $\mathrm{cut}=8+4=12$. At this juncture, the internal attribute vector $att$ is $[2,1,0,0,0]$, signifying that two edges connecting to the CS attribute and one edge connecting to the ML attribute are associated with community $C$. Therefore, $|\mathrm{cut}_a(v_3,C)|= \mathcal{F}[v_3]\cdot att^T=[1,1,0,0,0] \cdot [2,1,0,0,0]^T=3$, representing the three attribute edges through which node $v_3$ is linked to community $C$ via its own attributes (i.e., CS and ML). The updated $\mathrm{cut}_a$ then becomes $7+5-2\times 3=6$, and $\mathrm{vol}_a=9+5=14$. Following the addition of node $v_3$, $att$ is updated to $[1,1,0,0,0] + [2,1,0,0,0]=[3,2,0,0,0]$. This process continues iteratively, adding nodes sequentially until all nodes (i.e. $\Delta(h)$) are added.
\end{example}


\begin{theorem}\label{thm:pce}
The time complexity and space complexity of Algorithm \ref{PCE} are $O(m+nk)$ and $O(m+n+k)$, respectively. $k$ ($k<<n$) is the attribute dimension.
\end{theorem}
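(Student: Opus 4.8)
The plan is to walk through Algorithm \ref{PCE} line by line, attribute a cost to each block, and sum. The overall strategy rests on the observation that every vertex of $G$ is inserted into $tmp$ exactly once (across all iterations of the outer \textbf{for} loop on $h$), so the inner loop body (Lines 7--10) is executed a total of $O(n)$ times, each line of which I will bound by amortized work proportional to the degree (topological or attribute) of the vertex being inserted.

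First I would handle the preprocessing in Lines 1--3. The product $\mathcal{F}\cdot(\mathbf{1}_n\mathcal{F}-\mathbf{1}_k)^\top$ of Lemma \ref{lem:att_vol} is the key object: computing the row-sum vector $\mathbf{1}_n\mathcal{F}\in R^{1\times k}$ costs $O(nk)$ (it is just summing the $n$ rows of $\mathcal{F}$, each of length $k$, and exploiting $0/1$ sparsity this is $O(\mathrm{nnz}(\mathcal{F}))\le O(nk)$), and then multiplying the $n\times k$ matrix $\mathcal{F}$ against that length-$k$ vector is again $O(nk)$; the summation $\mathrm{vol}_a(V)=\sum_v d_a(v)$ is $O(n)$. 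So Lines 1--3 take $O(nk)$ time. Initialization of $\mathrm{cut},\mathrm{vol},\mathrm{cut}_a,\mathrm{vol}_a,C,tmp,att$ is $O(d(q)+d_a(q)+k)=O(m+k)$.

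Next, the main loop. The computation of the BFS layers $\Delta(h)=N^h(q)\setminus N^{h-1}(q)$ across all $h$ is a single breadth-first search, costing $O(m+n)$ in total. For a fixed inserted vertex $u$ in Line 7--10: Line 8 needs $|C\cap N(u)|$, obtained in $O(d(u))$ time by scanning $N(u)$ and testing membership in $C$ with an $O(1)$-lookup bitmap; Line 9 needs $\mathcal{F}[u]\cdot att^\top$, a dot product of two length-$k$ vectors, in $O(k)$ time, plus we already have $d_a(u)$ precomputed; Line 10 adds $\mathcal{F}[u]$ into $att$ in $O(k)$ time. Summed over all $n$ insertions this is $\sum_u O(d(u)+k)=O(m+nk)$. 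The conductance test and update in Lines 11--13 is $O(1)$ arithmetic per layer, hence $O(n)$ overall (at most $n$ layers); the only subtlety is Line 13's $C\gets tmp$, which if done by copying would cost $O(n)$ per layer and blow the bound — so here I would note that one keeps $C$ as a pointer/marker (e.g.\ record the best insertion index and reconstruct $C$ once at the end, or swap references), making it $O(1)$ amortized. Combining, the time is $O(m+nk)$.

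For space: storing $G$ (adjacency lists) is $O(m+n)$; storing $\mathcal{F}$ in sparse $0/1$ form is $O(\mathrm{nnz}(\mathcal{F}))$, but since we only ever need the precomputed vector $d_a\in R^{n}$ and the working vector $att\in R^{k}$ and the row-sum vector in $R^k$, the persistent auxiliary space is $O(n+k)$; the bitmap for $C$/$tmp$ membership is $O(n)$. Hence $O(m+n+k)$ space. The one step I expect to need the most care is precisely the amortization argument for Lines 12--13 (avoiding a per-layer $O(n)$ copy of $C$) together with the justification that Line 1 is genuinely $O(nk)$ and not $O(nk+k^2)$ or worse — both are routine once the data structures (bitmaps, reference-swapping, precomputed $d_a$) are fixed, but they are where a naive reading of the pseudocode would give a looser bound.
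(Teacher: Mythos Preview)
Your proposal is correct and follows essentially the same line-by-line cost accounting as the paper: $O(nk)$ for the matrix--vector product in Line~1, $O(d(u)+k)$ per inserted vertex in Lines~7--10, summed to $O(m+nk)$, and $O(n+k)$ auxiliary space on top of $O(m+n)$ for the graph. You are in fact more careful than the paper, which glosses over the BFS cost and the potential per-layer copy at $C\gets tmp$; your pointer/marker remark is the right fix and worth keeping.
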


\begin{proof}
    Algorithm \ref{PCE} first takes $O(nk)$ time to  calculate the attribute degree vector $d_a$, $O(n)$ time to obtain $\mathrm{vol}_a(V)$, $\mathrm{cut}$, $\mathrm{vol}$, $\mathrm{cut}_a$, and $\mathrm{vol}_a$  (Lines 1-2). In Lines 4-13, Algorithm \ref{PCE} can incrementally calculate and update attribute-augmented conductance by Fact \ref{fact:uaec}. Specifically, we assume that $u$ is the added vertex in Line 6, it takes $O(d(u))$ time to incrementally update $cut$ and $ O(k)$ time to update $\mathrm{cut}_a$ and $att$. Thus, it takes $O(\sum_{u \in V}(d(u)+k))=O(m+nk)$ to execute Lines 4-13. As a result, Algorithm \ref{PCE} takes $O(m+nk)$  time in total. For the space complexity, Algorithm \ref{PCE} just needs an extra $O(n+k)$ to store $d_a$,  $C$,  $tmp$, and $att$. Besides, it needs $O(m+n)$ to store the input graph $G$. As a consequence,  Algorithm \ref{PCE} needs $O(n+m+k)$  space in total. \noindent\hspace*{\fill}$\blacksquare$

\end{proof}

\begin{figure}[t]
    \centering
\includegraphics[width=0.9\linewidth]{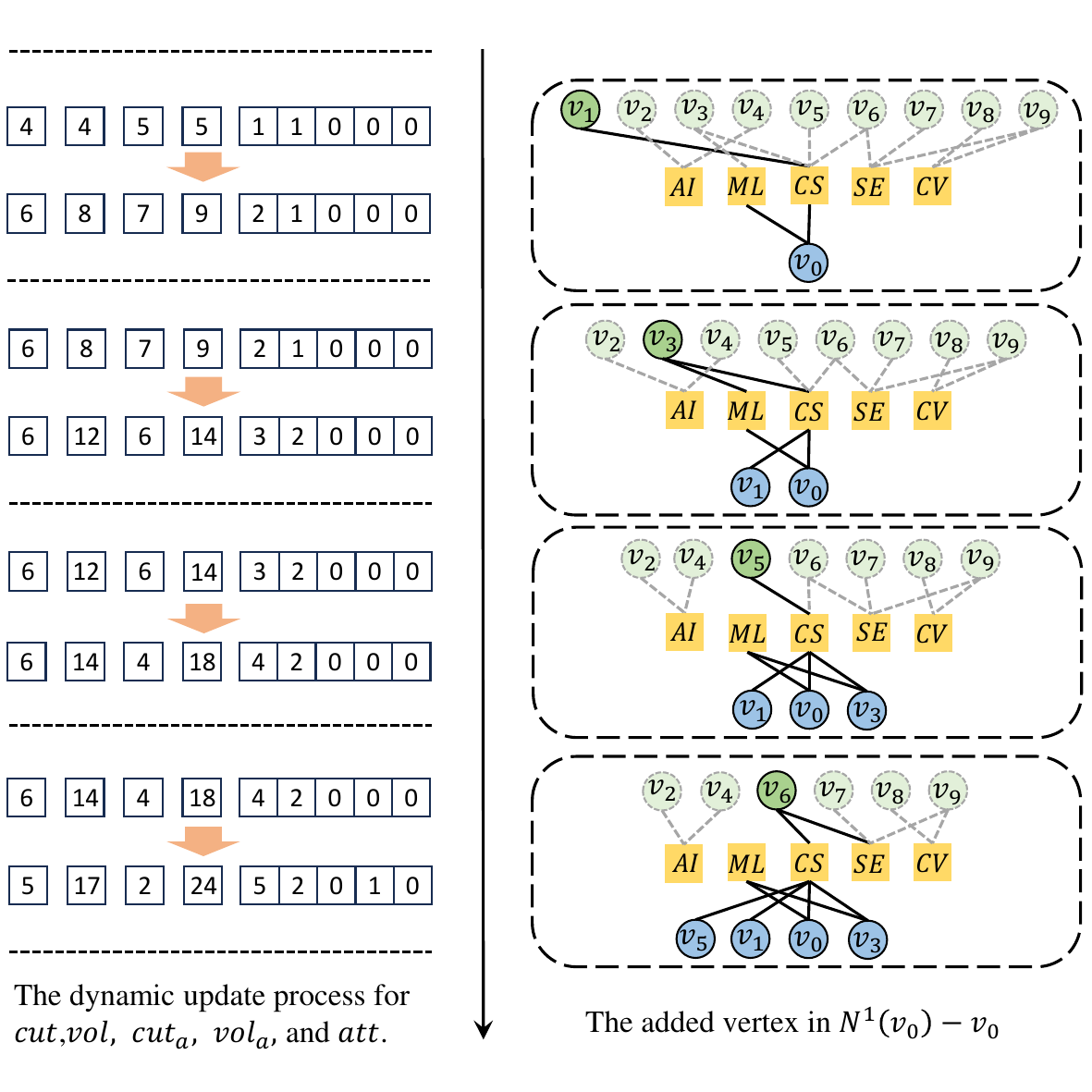}
    \vspace{-5mm}
    \caption{Illustration of Example \ref{alg_example} for Algorithm \ref{PCE}. The left side shows the update process that occurs after each node is added. On the right side, the update of $cut_a$ is elaborated using a node-attribute bipartite graph. The dark blue nodes indicate the current search space $tmp$, while the dark green nodes represent the nodes to be added (best view in color).}
\label{fig:algorithm}\vspace{-0.5cm}
\end{figure} 

\vspace{-0.3cm}
\section{Neural Community Refinement} \label{fcr}
Section \ref{ace} prioritizes scalability over community quality, necessitating further refinement of the coarse candidate community. Therefore, we develop a Neural Community Refinement method, structured into two primary components. In the State Encoder component, we guide the pre-training model using two loss functions that integrate community-aware node representations as state inputs for reinforcement learning. In the Community Refiner component, we employ reinforcement learning to balance exploration and exploitation, thereby identifying nodes to be appended or removed. This process ultimately yields high-quality communities. 
\vspace{-0.3cm}
\subsection{Community-aware State Encoder}  \label{subsec:se}
In this component, our objective is to incorporate community awareness into the embedded space of nodes. To this end, we design two loss functions to achieve a strong correlation between nodes and communities: \textit{Contrastive Loss} and \textit{Triplet Loss}. Intuitively, an edge between two nodes suggests a high probability that they belong to the same community, and the absence of such an edge indicates otherwise. Thus, we model this relationship using contrastive loss:
\begin{equation}\label{CL}\small
L_{C}=\sum_{u,v \in V} A_{uv} \delta (h_{u},h_{v}) + (1-A_{uv}) max(0,\gamma _{1}-\delta (h_{u},h_{v})).
\end{equation}
$\delta(\cdot)$ represents the distance metric, for which we utilize cosine distance.  $\gamma _{1}$ is the margin hyperparameter that indicate the minimum distance. 
In the context of the community search task, when provided with a seed node, the remaining nodes can be classified into two distinct groups: those within the same community and those in different communities. To effectively capture this differentiation, we implement triplet loss:
\begin{equation}\small\label{TL}
L_{T}=\sum_{\left ( q,q_{+},q_{-} \right ) }^{} max\left \{ \delta \left ( h_{q},h_{q_{+}} \right ) -\delta \left ( h_{q},h_{q_{-}} \right ) +\gamma_{2} ,0 \right \}.
\end{equation}
where $q$, $q_+$, and $q_-$ denote the query node, the positive sample (i.e.,  $q_+$ and $q$ are in the same community), and the negative sample (i.e., $q_-$ and $q$ are in different communities), respectively.  $\gamma _{2}$ is the  margin hyperparameter that indicates the minimum distance. 
We concurrently account for both of the aforementioned loss functions. As a consequence, the aggregate loss is formally defined as follows:
\begin{equation}\label{eq:loss}
L=L_{T} + \alpha L_{C}.
\end{equation}
Without loss of generality, in this paper, we utilize  Graph Convolutional Network (GCN \cite{GCN}) as our encoder in this component, defined as follows:
\begin{equation}\small  \label{GCN}
h_{u}^{l+1}=Dr(\sigma (h_{u}^{l}W_{s}^{l+1}+\sum_{v\in N(u)} \frac{h_{v}^{l}W^{l+1}}{\sqrt{(d(u)+1)(d(v)+1)}}+b^{l+1})). 
\end{equation}
Here, $W_{s}^{l+1}$, $W^{l+1}$ and $b^{l+1}$ represents the trainable weights, and $\sigma(\cdot)$ denotes a nonlinear activation function, e.g., ReLU. The parameter $Dr(\cdot)$ refers to the dropout rate, which is used to prevent overfitting. In this paper, we employ a two-layer GCN architecture.

\vspace{-0.2cm}
\subsection{RL-based Community Refiner} \label{subsec:rlcr}
As previously indicated, after acquiring the candidate community, we can redefine the community search problem as a community refinement graph optimization task.

\begin{definition} [Community Refinement Graph Optimization] \label{def:cr}
Given a graph $G$ and a candidate coarse community $C_{coa}$, along with a score function $\Upsilon(.)$, the goal of community refinement is to identify a community $C_{opt}$ with the optimal $\Upsilon(C_{opt})$ by appending or removing nodes for $C_{coa}$.
\end{definition}


\begin{theorem} \label{thm:rl}
The community refinement problem is NP-hard.
\end{theorem}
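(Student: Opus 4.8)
The plan is to prove NP-hardness by a reduction from a well-known NP-hard problem, treating the community refinement task as an optimization problem whose decision version asks whether there is a community $C$ (reachable from $C_{coa}$ by additions/removals) with $\Upsilon(C)$ below a given threshold. Since the definition allows arbitrary additions and removals of nodes starting from $C_{coa}$ (as long as connectivity and containment of the query node are respected), the feasible set is essentially ``all connected vertex subsets containing $q$'', so the problem reduces to: find a connected subgraph containing $q$ that optimizes the score function $\Upsilon$. When $\Upsilon$ is instantiated as conductance (or attribute-augmented conductance), this is precisely the minimum conductance community / local clustering problem, which is known to be NP-hard. So my first step is to make the score function concrete: I would take $\Upsilon = \Phi$ (or at least $\phi_t$), matching the metric actually used elsewhere in the paper, and then cite or re-derive the hardness of minimizing conductance over connected subgraphs containing a fixed vertex.

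\textbf{The reduction.} The cleanest route is to reduce from a graph problem whose optimum corresponds to a sparse cut. The plan is to reduce from the \emph{minimum conductance cut} problem (or, if a gadget is needed to enforce the ``contains $q$'' and connectivity constraints, from a constrained variant). Concretely: given an instance $G'$ of minimum bipartition-style conductance, build $G$ by attaching a pendant query node $q$ to a chosen vertex, set $C_{coa}$ to be all of $V$ (or $\{q\}$ together with one neighbor), and observe that any connected subset containing $q$ can be obtained from $C_{coa}$ by a sequence of removals/additions, so the refinement search space coincides with the family over which conductance must be minimized. Then a solver for community refinement with threshold $\theta$ decides whether $G'$ has a cut of conductance at most $\theta$. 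The subtlety is handling the ``$\min\{\mathrm{vol}(C),\mathrm{vol}(\bar C)\}$'' denominator and the connectedness requirement on $C$; these are exactly the technical complications that make the known hardness proofs for conductance-based community detection nontrivial, so I would lean on an existing hardness result (e.g. the NP-hardness of finding a minimum-conductance subset, or of the closely related sparsest-cut / densest-subgraph-with-fixed-vertex problems) rather than re-prove it from scratch.

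\textbf{Alternative reduction if $\Upsilon$ is left generic.} If the authors want the statement to hold for the generic score function in Definition~\ref{def:cr}, then the honest reading is that the theorem is asserting NP-hardness for the natural instantiations used in the paper; I would reduce from a single clean problem. A good candidate is a reduction from \emph{densest-$k$-subgraph} or from \emph{clique}: set $\Upsilon$ so that minimizing it forces the selected node set to be a dense cluster of prescribed size, then a polynomial-time refiner would detect a clique of size $k$. Alternatively, reduce from \emph{Steiner tree} / \emph{connected subgraph selection}, exploiting the connectivity constraint baked into the refinement definition, so that deciding the optimal $\Upsilon$ value already encodes a connectivity-constrained covering problem. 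Either way the structure is the same: polynomial construction of $(G, C_{coa}, \Upsilon)$ from the source instance, argument that the refinement-reachable family is rich enough to contain the solution, and argument that the optimal score crosses a threshold iff the source instance is a yes-instance.

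\textbf{Main obstacle.} The hard part will be pinning down exactly what ``appending or removing nodes for $C_{coa}$'' permits — whether the reachable set is literally all connected supersets/subsets of $q$, or whether some locality restriction (e.g. only boundary nodes, bounded number of steps) is implicitly intended. If there is no restriction, the reduction is straightforward modulo citing a known conductance-hardness result; if there \emph{is} a restriction, one must design a gadget so that the restricted moves still suffice to reach the critical configuration, and simultaneously argue that no ``illegal'' shortcut gives a spuriously good score. A secondary obstacle is the denominator $\min\{\mathrm{vol}(C),\mathrm{vol}(\bar C)\}$ in conductance: the reduction graph must be padded (e.g. with a large low-degree appendage) so that $\mathrm{vol}(C) \le \mathrm{vol}(\bar C)$ always holds for the sets of interest, turning conductance into the simpler ratio $|\mathrm{cut}(C,\bar C)|/\mathrm{vol}(C)$ and making the correspondence with the source instance exact.
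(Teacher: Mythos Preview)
Your proposal is sound but takes a harder road than the paper. The paper exploits the fact that Definition~\ref{def:cr} treats the score function $\Upsilon$ as part of the input with no structural restrictions, so it simply reduces from \textsc{Subset Sum}: each integer becomes a node of $G$, $C_{coa}=\emptyset$, and $\Upsilon(C)=1$ iff the nodes in $C$ sum to the target $t$ (and $0$ otherwise). With an unrestricted $\Upsilon$, hardness is almost immediate --- any NP-hard subset-selection problem can be encoded directly into the score function, and the refinement moves trivially reach every subset. Your approach, by contrast, fixes $\Upsilon$ to be conductance (or $\Phi$) and then leans on the known hardness of minimum-conductance / sparsest-cut style problems; this is considerably more technical (you correctly flag the $\min\{\mathrm{vol}(C),\mathrm{vol}(\bar C)\}$ denominator and the connectivity constraint as the places where padding and gadgetry are needed) but it buys a stronger and more meaningful statement, namely that the problem is hard even for the \emph{specific} objective the paper actually optimizes. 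Your ``alternative reduction if $\Upsilon$ is left generic'' paragraph is in fact the one closest to the paper's mindset --- pick the easiest source problem and a bespoke $\Upsilon$ --- though the paper goes with \textsc{Subset Sum} rather than clique or densest-$k$-subgraph, and does not bother with connectivity or the query-node constraint at all.
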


\begin{proof}
     Let $\Pi=\{<(u_{1},op_{1}),(u_{2},op_{2}),....>| u_{i} \in G, op_{i} \ is \ append \ or \ remove\}$ be the set of all possible refinement orderings, where $(u_{i},append)$ (resp., $(u_{i},remove)$) means that we append (resp., remove) the node $u_i$ for the current community. Thus, by Definition \ref{def:cr}, we can know that the goal of community refinement is to find an ordering $\pi \in \Pi$ such that $\Upsilon(C_{coa}+\pi)$ has the optimal score value. Note that $C_{coa}+\pi$ is a new community by executing the operation $\pi$ for $C_{coa}$. Based on the above concepts, we can reduce the well-known NP-hard problem of the maximum subset sum problem (decision version) to the community refinement problem within polynomial time. Given a set of integers $S=\{s_1,s_2,...s_n\}$ and a target integer $t$, the  subset sum decision problem is to check whether there exists a subset $S'\subseteq S$ such that the sum of the elements in $S'$ equals $t$. From this, we have the following steps. (1) Graph Construction: Construct a graph $G$ where each integer $s_i \in S$ is represented as a node, and let the candidate coarse community $C_{coa}=\emptyset$. (2) Score Function Definition: we define the score function $\Upsilon(C)$ for any community $C$ as follows:
\begin{equation}\small
\Upsilon(C)=
\begin{cases}
    1, & \text{if the sum of the nodes in $C$ is exactly $t$} \\
    0, & \text{otherwise}
\end{cases}
\end{equation}
(3) Polynomial Time Reduction: Given the set $S$ and target $t$, construct the graph $G$, candidate coarse community $C_{coa}$, and the score function $\Upsilon(C)$ as described above. The task of finding a subset $S'\subseteq S$ such that the sum of its elements is $t$ translates to finding a community $C_{opt}\subseteq G$ such that $\Upsilon(C_{opt})=1$. Therefore, this, in turn, is equivalent to finding an ordering $\pi \in \Pi$ such that $\Upsilon(C_{coa}+\pi)=1$, where $C_{coa}+\pi$ represents the community obtained by applying the sequence of operations $\pi$ starting from $C_{coa}$. (4) Equivalence of Problems: If there exists a subset $S'\subseteq S$ such that the sum of its elements is $t$, then there exists an ordering $\pi \in \Pi$ such that $\Upsilon(C_{coa}+\pi)=1$. Conversely, if there exists an ordering $\pi \in \Pi$ such that $\Upsilon(C_{coa}+\pi)=1$, then the corresponding subset of nodes in $G$ (which represents the elements of $S$) has a sum of exactly $t$.
 
Since the subset sum problem (decision version) is NP-hard \cite{IntroductionToAlgorithms2ndEdition}, and we have shown that it can be reduced to the community refinement problem in polynomial time, it follows that the community refinement problem is also NP-hard. \noindent\hspace*{\fill}$\blacksquare$
\end{proof}


There are two primary motivations for employing reinforcement learning (RL for short) in our approach. First, since the community refinement problem is NP-hard (Theorem \ref{thm:rl}), rendering direct solutions impractical. Second, our goal is to achieve adaptive node selection during the refinement stage, eliminating the need to establish a predetermined threshold for nodes or fix the resulting community size (details in Section \ref{subsec:eff}). Fortunately, reinforcement learning naturally has the advantage of solving the above two issues \cite{ppo1,ppo2,tpro}. Thus, we can formalize the community refiner component as a Markov Decision Process (MDP), characterized by the tuple $⟨\mathcal{S},\mathcal{A},\mathcal{P},\mathcal{R},\gamma⟩$ (Section \ref{subsec:rl}), which is  stated as follows:

$\bullet$ \textbf{State.} During the refinement process, we employ the trained state encoder to encode both the community and its neighbors, which subsequently serves as the input for the state representation. We define the initial state $\mathcal{S}_1$ as follows:
\begin{equation}\small \label{eq:state}
\mathcal{S}_{1}(u)=f_s(\mathcal{F}[u], \mathbb{I}\{u\in C_{coa}\} ), \text{where}\ u\in C_{coa}\cup \partial C_{coa}.
\end{equation}
Here, $\partial C_{coa}=\{v \in V \setminus C_{coa}| \exists u \in C_{coa}, (u, v) \in E\}$ is the boundary of $C_{coa}$. $\mathbb{I}$ represents the indicator vector, where it takes the value of 1 when node $u$ is present in the community $C_{coa}$ and 0 otherwise. 

$\bullet$ \textbf{Action.} At each step of this process, we consider the addition of a node from the current state, particularly from the neighbors of the intermediate community, while concurrently removing a node from the community. Ideally, these two operations should operate independently. To achieve this, we design two distinct prediction networks tailored to these different strategies. We utilize multilayer perceptrons (MLP) as the policy networks, which decode the state vector into scalar score values. Specifically, at step $t$, the two policy networks,  $f_{\phi}$ and $f_{\varphi}$, evaluate the action spaces of the two strategies by assigning scores, respectively:
\begin{equation}\label{eq:action}
Score(u)=
\begin{cases}
    f_{\phi}(\mathcal{S}_{t}(u)), & u\in \partial C_t \\
    f_{\varphi}(\mathcal{S}_{t}(u)), & u\in C_t
\end{cases}
\end{equation}
$\partial C_t$ and $C_t$ denote the respective action spaces for node addition and node removal. Separate policy networks are utilized to independently predict the action scores within each of these spaces.

$\bullet$ \textbf{State Transition.} Upon obtaining the scores, we identify the node with the highest score from $\partial C_t$ to be appended into the community, while concurrently removing the node with the lowest score from $C_t$. This procedure yields the new community $C_{t+1}$. We formulate the state transition as follows:
\begin{equation}\label{eq:state_trans}
    \mathcal{S}_{t+1}(u)=f_s(\mathcal{S}_t(u), \mathbb{I}\{u\in C_{t+1}\}), \ u\in C_{t+1}\cup \partial C_{t+1}.
\end{equation}

$\bullet$ \textbf{Reward.} Our main goal is to maximize the score value  of the community (Definition \ref{def:cs}). To achieve this, we define the reward function as the variation in the score value of the target community(i.e., \textit{Obj($\cdot$)}) \cite{10.1145/3514221.3520254,shah2024neurocut,peng2024graphrare,wu2022clear,deng2021unified}. More specifically,
\begin{equation}\label{reward}
\mathcal{R}_{t}=\textit{Obj}\left ( C_{t+1} \right ) -\textit{Obj}\left ( C_{t} \right ). 
\end{equation}
The reward function $\mathcal{R}_{t}$ described above introduces a substantial enhancement to the objective function. This formulation strategically guides the model to prioritize improvements in community value, particularly within regions exhibiting high objective function values, thereby facilitating convergence toward the global optimum. Furthermore, the proposed reward mechanism enhances the flexibility of our framework, rendering it adaptable to any objective function. For networks with available ground-truth labels, established evaluation metrics such as ARI and F1 can be employed. Conversely, for real-world networks without labels, the framework can be guided by user-specified objectives, such as the proposed attribute-augmented conductance, thereby facilitating the fine-tuning of communities to achieve superior structural and attribute-aligned configurations.

Moreover, we propose enhancing the reward mechanism to better address the characteristics of labeled networks:
\begin{equation}\label{r_label}
r_{label}=
\begin{cases}
    1 &\text{if the appending is in}\ \widehat{C} \\
    -1 &\text{otherwise}
\end{cases}
\end{equation}
Here, $\widehat{C}$ represents the ground-truth community. Accordingly, the reward function is revised to $\mathcal{R}_{t}+r_{label}$.

The described above specifically applies to the node addition process. Conversely, during the node removal process, the condition for $r_{label}$ is modified if the removal is not in $\widehat{C}$. This adjustment ensures that the reward mechanism appropriately encourages the elimination of nodes that are not relevant to the community.

\begin{figure*}[t]
    \centering
\includegraphics[width=0.95\linewidth]{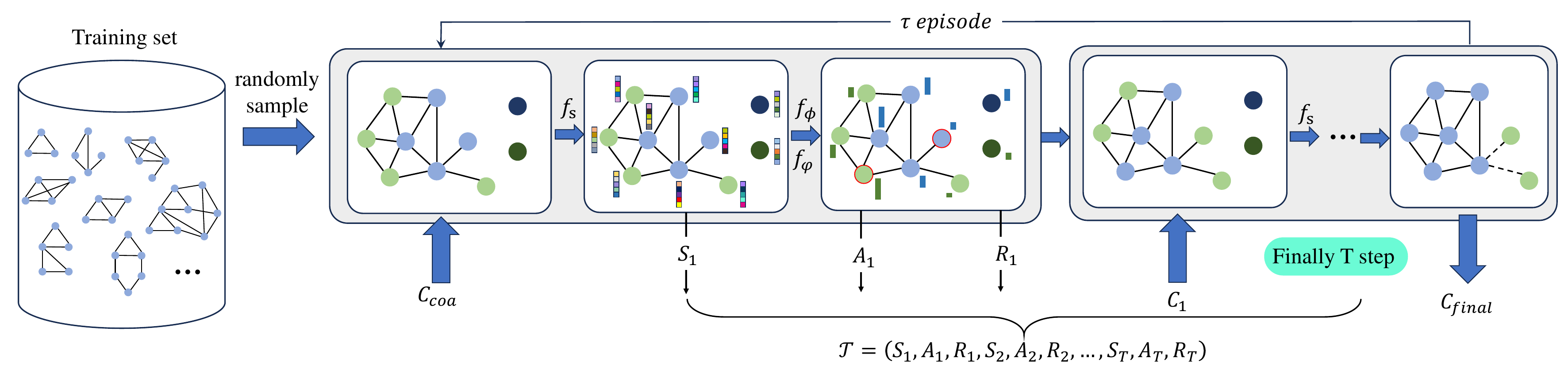}
    \caption{Offline training. Initiating the refinement process from the candidate coarse community, each step generates a new set of states, actions, and rewards, continuing until the predefined termination policy is activated. This procedure is conducted over $\tau$ episodes, during which the agent is trained utilizing the $\tau$ trajectories obtained (best review in color).}
    \label{fig:train}
\end{figure*}


\stitle{Flexible termination strategies.} Existing approaches often rely on a community scoring threshold or limit the number of nodes \cite{ICS-GNN, QD-GNN}, are unsatisfactory due to their restrictive nature. Inspired by these, we introduce two flexible and smooth termination signals, $s_{a}$ and $s_{r}$, to notify the two strategy networks, respectively. Taking the appending process as an example, we conceptualize the termination signal $s_a$ as a virtual node, with its features being randomly generated, and incorporate it into the prediction along with the other node features. The appending process terminates when our RL model selects the $s_{a}$ node. Analogously, the removing process terminates when the model selects the predefined termination node $s_{r}$. Upon termination of both processes, the resulting state is considered the final community. The flexible termination strategy is evaluated in Section \ref{subsec:eff}.

\stitle{Offline training.}  Figure \ref{fig:train} illustrates the details of the offline training. Specifically, during the offline training phase, we initially sample a candidate coarse community from the training dataset and permit the agent to execute the complete community refinement process. Specifically, the agent begins by encoding the state of the community (Eq.~\eqref{eq:state}, illustrated by the color bar in the figure), and subsequently employs the policy network to decode this state into score values (Eq.~\eqref{eq:action}, represented by the blue and green bars), which are then utilized to calculate the reward and update the state. It is crucial to emphasize that, to strike a balance between exploration and exploitation during training, we adopt an $\epsilon$-greedy action strategy. Under this strategy, there is a probability of $1-\epsilon$ that we will select the node with the highest (or lowest) score, while with a probability of $\epsilon$, a random action is taken. The value of $\epsilon$ is linearly annealed from 1.0 to 0.05 throughout the refinement process.  The refinement process is repeated iteratively until the predefined termination strategies are satisfied, resulting in the formation of a trajectory $\mathcal{T}$. During each episode, a single trajectory is used for iterative training, and this procedure continues over $\tau$ episodes until model convergence.





Policy gradient method \cite{pg} is a well-known approach to strategy learning, including Trust Region Policy Optimization (TRPO) \cite{tpro} and Proximal Policy Optimization (PPO) \cite{ppo1,ppo2}. In this paper, we adopt PPO as the key component of our model due to its ability to stabilize the training process by imposing constraints on the policy updates, thus preventing large, destabilizing changes. The loss function for the PPO is delineated below:
\begin{equation}\small\label{PPO}
\mathcal{L}(\phi) = \mathbb{E}_{t} \left[ \min \left( r_t(\phi) A_t, \text{clip}(r_t(\phi), 1 - \varepsilon , 1 + \varepsilon ) A_t \right) \right].
\end{equation} $r_t(\phi)=\frac{f_{\phi_{new}}}{f_{\phi_{old}}}$ represents the ratio of the new policy to the old policy. $A_t$ denotes the advantage function, which is typically estimated using temporal differencing methods. The clipping function $\text{clip}(r_t(\phi), 1 - \varepsilon, 1 + \varepsilon)$ constrains the policy ratio $r_t(\phi)$ within the interval $[1 - \varepsilon,1 + \varepsilon]$. Specifically, if $r_t(\phi)$ is less than $1 - \varepsilon$, it is set to $1 - \varepsilon$; if $r_t(\phi)$ exceeds $1 + \epsilon$, it is set to $1 + \epsilon$. This clipping mechanism is essential to the loss function, as it prevents excessively large updates to the policy during reinforcement learning training, thereby mitigating the risk of instability in the training process.

\stitle{Online refinement.} The online refinement commences with the agent encoding the candidate community and its neighbors into embedding vectors through the state encoder. These vectors are then decoded into node scores using the policy network. Distinct from the training phase, this stage does not involve exploratory actions; instead, the agent directly selects nodes with the highest (or lowest) scores to construct the new community. This iterative process stops until a predetermined termination criterion is met.

\stitle{Remark.} We further elaborate on the advantages of our graph optimization task compared to node classification tasks, such as those represented by TransZero.  When TransZero treats community search as a node classification task, it requires calculating the similarity of all nodes relative to the query node and then assessing changes in community quality, halting the search immediately when the quality declines.  Note that community quality is determined by a function proposed by TransZero, which is limited by the effectiveness of the function itself.  In contrast, we employ a reinforcement learning approach to learn strategies for adding or removing nodes and conditions for stopping the search from training data.  The learned strategies determine which nodes to add or remove and when to stop the search, without any explicit constraints.  Compared to node classification, graph optimization tasks can adapt to different types of networks and flexibly discover high-quality communities.

\begin{table}[t]  
\label{table1}
\caption{Statistics of Datasets.}
\centering 
\resizebox{0.95\linewidth}{!} 
{
\begin{tabular}{c|c|c|c|c|c|c}
\toprule
\label{table:dataset}
\small
\rule{0pt}{10pt} 
\textbf{Dataset}    &$|V|$    &$|E|$    & $|F|$    &$|C|$   &$Mean(C)$    &$Mean(d)$ \\ \midrule 
Cora \cite{TransZero}    &2,708    &10,556    &1,433     &7      &386.86      &3.89    \\  
Facebook \cite{gt}   &3,622    &72,964    &317     &130     &15.62      &40.29    \\  
Cocs  \cite{TransZero}  &18,333    &163,788    &6,806     &15      &1,222.2      &8.93    \\ 
Twitter \cite{gt}   &87,760    &1,293,985    &27,201     &2,838      &10.88      &29.49    \\ \midrule 
Amazon \cite{gt}   &13,178    &3,3767    &658     &4,517      &9.31      &5.12    \\ 
Youtube  \cite{gt}  &216,544    &1,393,206    &2,165     &2,865      &7.67      &12.86    \\ \bottomrule
\end{tabular}}
\vspace{-5mm}
\end{table}
\section{Experimental Evaluation}\label{sec:experiments}

In this section, we conduct comprehensive experiments to evaluate the effectiveness and efficiency of our solutions. All experiments are conducted on a Linux machine with an Intel Xeon(R) Silver 4210 @2.20GHz CPU and 1 TB RAM. 

\subsection{Experimental Setup}

\noindent \textbf{Datasets.} 
Our solutions have been rigorously evaluated on six publicly available datasets with ground-truth communities (Table \ref{table:dataset}), which serve as well-established benchmarks for the community search problem \cite{huang2014querying,k-core2,yao2021efficient,k-clique1,DMCS,k-ecc,ICS-GNN,QD-GNN,CommunityAf,coclep,cgnp,TransZero}. In these datasets, $|F|$ represents the number of attributes, $|C|$ denotes the number of communities, $Mean(C)$ indicates the average size of the communities, and $Mean(d)$ reflects the average degree of the nodes. We classify the ground-truth communities into three categories: training, validation, and testing communities. The training communities are utilized for model training, the validation communities for fine-tuning parameters, and the testing communities for assessing the model's performance. Consistent with previous research \cite{ICS-GNN,QD-GNN,CommunityAf,coclep,cgnp,TransZero}, we also randomly allocate these categories in  5:1:4.



\noindent \textbf{Algorithms.} 
The following ten baselines are compared. (1) Traditional rule-based community search methods:  CTC \cite{k-truss3} based on  $k$-truss, CST \cite{k-core2} based on $k$-core,  M$k$ECS \cite{k-ecc} based on $k$ edge connected component, and conductance \cite{DBLP:conf/focs/AndersenCL06,DBLP:conf/kdd/KlosterG14,DBLP:conf/aaai/LinLJ23}. Note that we  do not include density \cite{DBLP:journals/pvldb/WuJLZ15,DBLP:conf/sigmod/DaiQC22,DBLP:conf/kdd/YeLLLLW24} and modularity \cite{DMCS} in the experiments because they ignore the external sparsity and are outperformed by conductance \cite{DBLP:journals/pacmmod/KamalB24,DBLP:journals/eswa/HeLYLJW24}. (2) Learning-based methods:  ICS-GNN \cite{ICS-GNN}, QD-GNN \cite{QD-GNN},  COCLEP \cite{coclep} CGNP \cite{cgnp}, CommunityAF \cite{CommunityAf} TransZero \cite{TransZero}. NSCAC is our proposed algorithm.  There is no comparison with \emph{ALICE} \cite{ALICE}, as the authors did not provide the source code, and reproducing the results is highly challenging. However, we discuss its drawbacks in Sections \ref{sec:intro} and \ref{sec:existing}.

\noindent \textbf{Metrics.} Several well-established metrics are used to evaluate the quality of the identified communities \cite{coclep,TransZero,DMCS}: the F1-score \cite{f1}, Normalized Mutual Information (NMI) \cite{nmi}, and Jaccard similarity (JAC) \cite{jac}. Higher values for the F1-score, NMI, and JAC signify better performance. We also report the runtime to test the efficiency.  Unless otherwise specified, for enhanced reliability, we select one vertex at random from each ground-truth community to serve as the query vertex and  report the average runtime and quality.

\noindent \textbf{Parameters.} Unless specified otherwise, we take the default parameters of the ten baselines. On the other hand, our model NCSAC involves three main parameters:  $\beta$ (the preference parameter in Eq. \ref{eq:aec}), $\alpha$ (the loss function parameter in Eq. \ref{eq:loss}), and  $\tau$ (the number of episodes in offline training of Section \ref{subsec:rlcr}). We set $\beta=0.2$, $\alpha=0.5$, $\tau=2000$ by default. 

\subsection{Effectiveness Evaluation} \label{subsec:eff}
 \stitle{Exp-1: Comparison of our algorithm NCSAC with baselines.} Figure \ref{fig:effectiveness}  provides a detailed box plot showing the F1-score, NMI, and JAC values achieved by different algorithms across all datasets. Besides, Table \ref{table:metric} also shows the median scores of different methods for a clearer comparison. Note that we only explain the F1-score because NMI and JAC  have similar results. For F1-score, we have the following observations: (1)  Our model \ourname \ achieves the best median scores on five of the six datasets (on Cocs, \ourname \ is slightly worse than TransZero), with F1-score improvement of 5.3\%$\sim$42.4\%. For example, on Facebook, the F1-score of \ourname \ is 0.47, whereas the runner-up is 0.33, representing an improvement of 42.4\% in the F1-score. This is because our model integrates the strengths of both rule-based and learning-based methodologies.  However, baseline methods study rule-based approaches and learning-based approaches independently,  resulting in suboptimal community quality. (2) Our proposed \ourname\ consistently outperforms all baselines in most cases, except for being slightly less effective than QD-GNN and COCLEP on Facebook, in terms of the quality lower bound. This outcome is attributed to \ourname's robust two-stage design. Specifically, baseline methods typically add nodes sequentially from a query node based on predefined strategies, which can be susceptible to noise, resulting in early termination and suboptimal community quality. Essentially, some nodes that could potentially improve the community score are excluded due to the limitations of these models. In contrast, \ourname\ initially utilizes a novel attribute-augmented conductance method to identify a satisfactory and interpretable candidate coarse community. This community is then refined using reinforcement learning, ensuring a more consistent and superior performance. (3) \ourname\ exhibits stability that is on par with (indicated by a narrower box width) other learning-based methods. Besides, learning-based models are consistently better than traditional rule-based models (e.g., CTC, CST, and MKECS)  across all datasets. This can be attributed to the fact that learning-based models possess the flexibility to capture both structural proximity and attribute similarity within communities, a feat that traditional rule-based models are unable to achieve.  In short, these results give clear evidence that our model can find higher-quality communities when contrasted with baselines.

\begin{figure*}[t]
    \centering
    \includegraphics[width=0.95\linewidth]{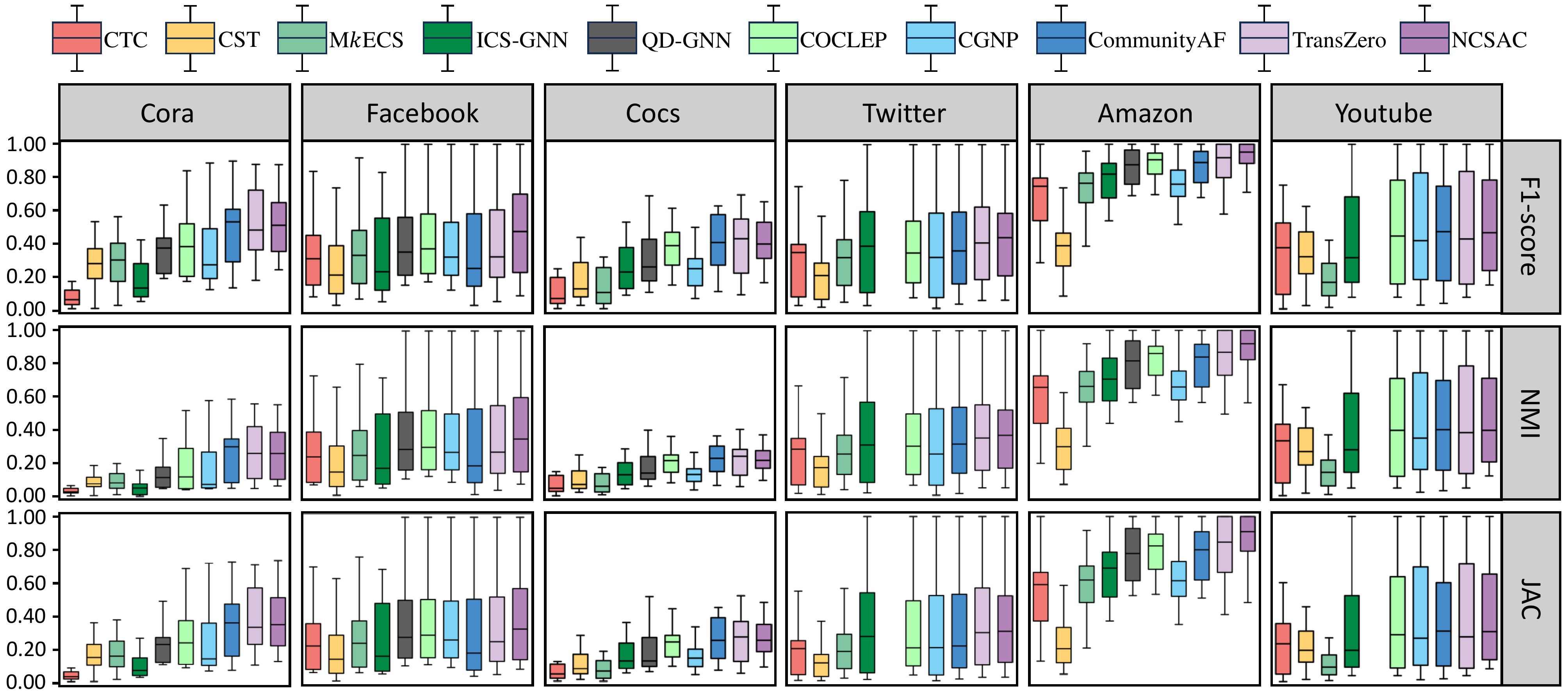}
    \caption{Comparison of our algorithm NCSAC with baselines. Since  QD-GNN runs out of memory on Twitter and YouTube, we only report their results on the four remaining datasets (best view in color).}
    \label{fig:effectiveness}
\end{figure*}

  \begin{table*}[t!]
	\caption{\small Median scores of different methods. OOM indicates that the corresponding methods ran out of memory.} 
	\centering
	\footnotesize
	\scalebox{1}{
		\begin{tabular}{c|ccccccccc}
			\toprule
			\multicolumn{1}{c|}	{F1/NMI/JAC} & Cora& Facebook&Cocs& Twitter&Amazon& Youtube\\
			\midrule
			
			\multirow{8}{*} \textit{CTC} & 0.05/0.03/0.02 & 0.30/0.22/0.23 &0.06/0.04/0.03  & 0.34/0.20/0.21 &0.74/0.58/0.65& 0.38/ 0.34/0.23\\
			
			\textit{CST}&0.27/0.14/0.07 & 0.21/0.14/0.14 & 0.14/0.07/0.09  & 0.20/0.11/0.12 & 0.38/0.20/0.29& 0.31/0.26/0.20\\
			\textit{MkECS}  & 0.29/0.15/0.08 &0.32/0.23/0.24 & 0.13/0.05/0.07 & 0.31/0.18/0.19 & 0.76/0.61/0.66& 0.18/0.14/0.10\\
			\textit{ICS-GNN}& 0.12/0.06/0.04 & 0.22/0.15/0.16& 0.22/0.12/0.13 & 0.38/0.27/0.28 & 0.81/0.68/0.70& 0.31/0.27/0.20&\\
			\textit{QD-GNN}& 0.36/0.22/0.11 & 0.34/0.27/0.28& 0.25/0.13/0.13 & OOM & 0.87/0.77/0.81& OOM \\	
			\textit{COCLEP}&  0.37/0.23/0.11 & 0.36/0.28/0.29& 0.39/0.21/0.21  & 0.34/0.20/0 & 0.90/0.82/0.86&0.44/0.40/0.30\\
			\textit{CGNP}&  0.26/0.14/0.07 & 0.31/0.25/0.26& 0.25/0.14/0.16  & 0.31/0.20/0.21 & 0.75/0.61/0.65 &0.41/0.37/0.27\\
			\textit{CommunityAF}&\textbf{0.52}/\textbf{0.35}/0.28 &0.25/0.17/0.18   &0.40/0.22/0.25 &0.37/0.31/0.23 & 0.88/0.80/0.83 &0.44/0.40/0.31\\
			\textit{TransZero}&0.47/0.33/0.26 &0.32/0.24/0.26& \textbf{0.43}/\textbf{0.24}/\textbf{0.29} & 0.40/0.34/0.30& 0.91/0.84/0.86&0.42/ 0.38/0.29\\
			
			\textit{NCSAC}  &0.50/0.34/\textbf{0.29} & \textbf{0.47}/\textbf{0.32}/\textbf{0.34}& 0.42/0.23/0.27 &\textbf{0.41}/\textbf{0.35}/\textbf{0.31} & \textbf{0.94}/\textbf{0.89}/\textbf{0.91} & \textbf{0.46}/\textbf{0.41}/\textbf{0.33}\\
			\bottomrule	
	\end{tabular}} 
	\label{table:metric}
\end{table*}


 \begin{table}[t]
\renewcommand{\arraystretch}{1.2}
\caption{Comparison of the traditional conductance  and the attribute-augmented conductance.}\vspace{-0.3cm}
\resizebox{0.95\linewidth}{!} 
{
\begin{tabular}{c|ccc|ccc}
\hline
\label{table:con}
\small
\rule{0pt}{10pt} 
\multirow{2}{*}{}   &\multicolumn{3}{c|}{Traditional Conductance}  &\multicolumn{3}{c}{Attribute-augmented Conductance} \\
\cline{2-7}
        &F1-score    &NMI    &JAC    &F1-score    &NMI    &JAC \\ \hline
Cora    &0.2195    &0.0712    &0.1348     &0.3721      &0.1281      &0.2486    \\
Facebook    &0.3059    &0.2403    &0.2191     &0.4145     &0.3171      &0.2938    \\
Cocs    &0.2562    &0.0858    &0.1522     &0.3207      &0.1580      &0.2121    \\
Twitter    &0.2292    &0.1968    &0.1428     &0.3647      &0.3033      &0.2469    \\
Amazon    &0.8666    &0.8139    &0.7752     &0.8892      &0.8381      &0.8038    \\
Youtube    &0.2664    &0.2380    &0.1988     &0.3995      &0.3511      &0.2882    \\ \hline
Average $+/-$ &-     &-       &-       &+0.1028       &+0.0749       &+0.0784 \\ \hline 
\end{tabular}}\vspace{-0.5cm}
\end{table}

 
 \stitle{Exp-2: Comparison of the traditional conductance and the attribute-augmented conductance.} Table \ref{table:con} offers a comparative analysis between traditional conductance and the newly introduced attribute-augmented conductance.  To isolate the evaluation to the community extraction process alone, we report the quality of the candidate coarse community without considering the subsequent refinement stage. As can be seen, the results demonstrate that our proposed attribute-augmented conductance consistently identifies more accurate communities compared to traditional conductance. Remarkably, attribute-augmented conductance achieves an average improvement of 0.1028 in F1-score, 0.0749 in NMI, and 0.0784 in JAC across all datasets, highlighting its exceptional ability to discern community structures.

\begin{figure*}[t!]
    \centering
    \subfigure[F1-score with varying $\beta$]{
    \includegraphics[height=3cm]{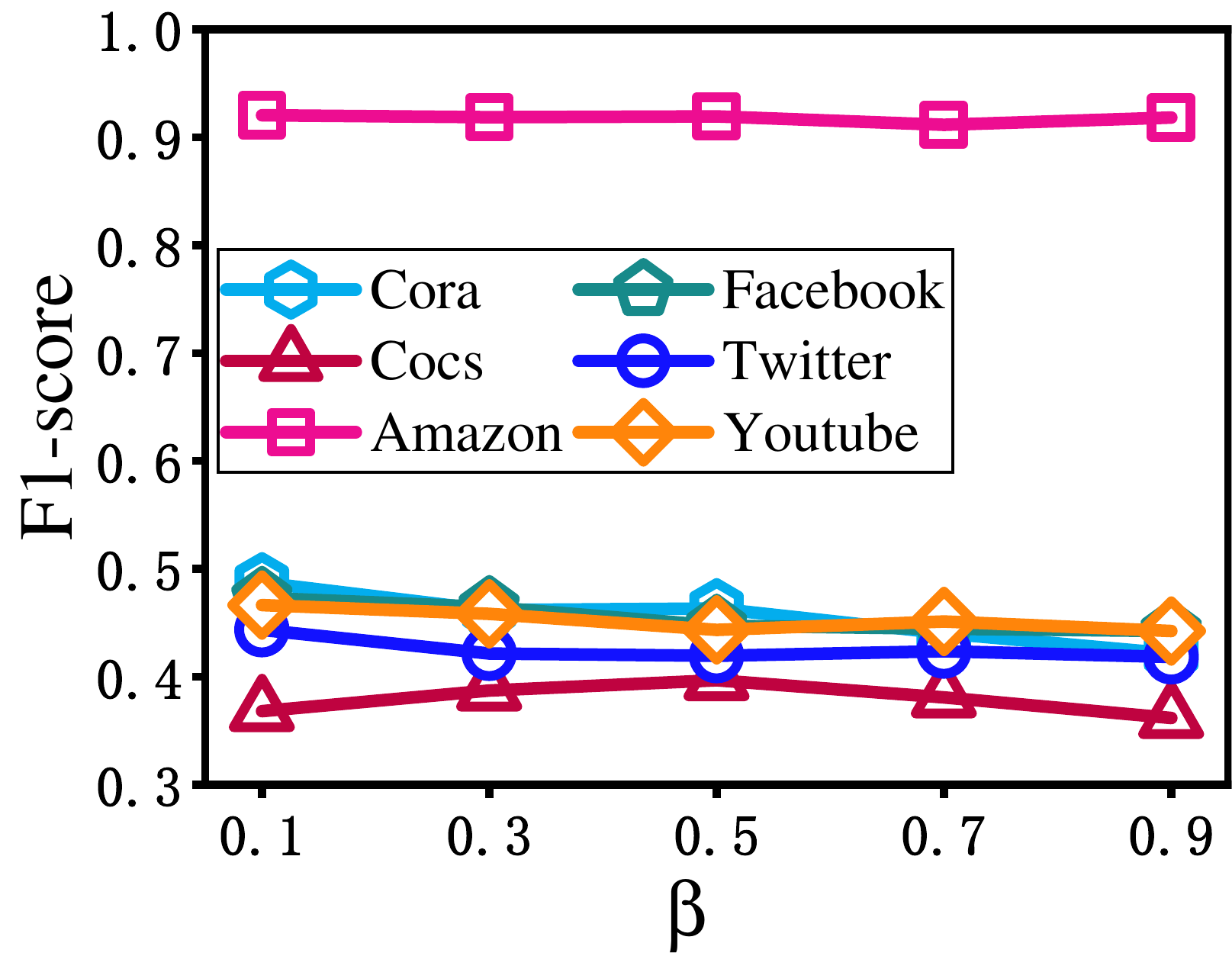}
    }
    \subfigure[F1-score with varying $\alpha$]{
    \includegraphics[height=3cm]{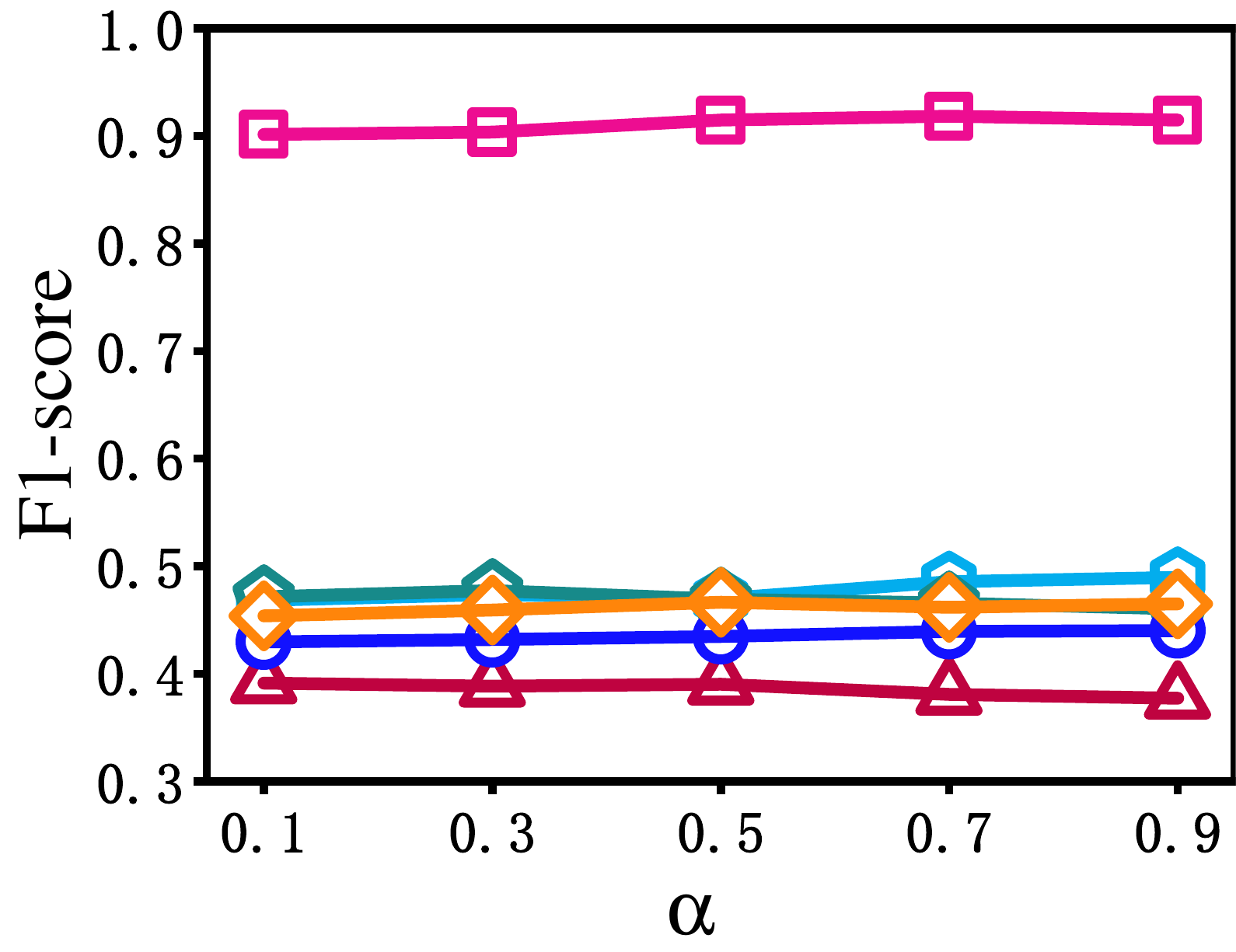}
    }
    \subfigure[F1-score with varying $\tau $]{
    \includegraphics[height=3cm]{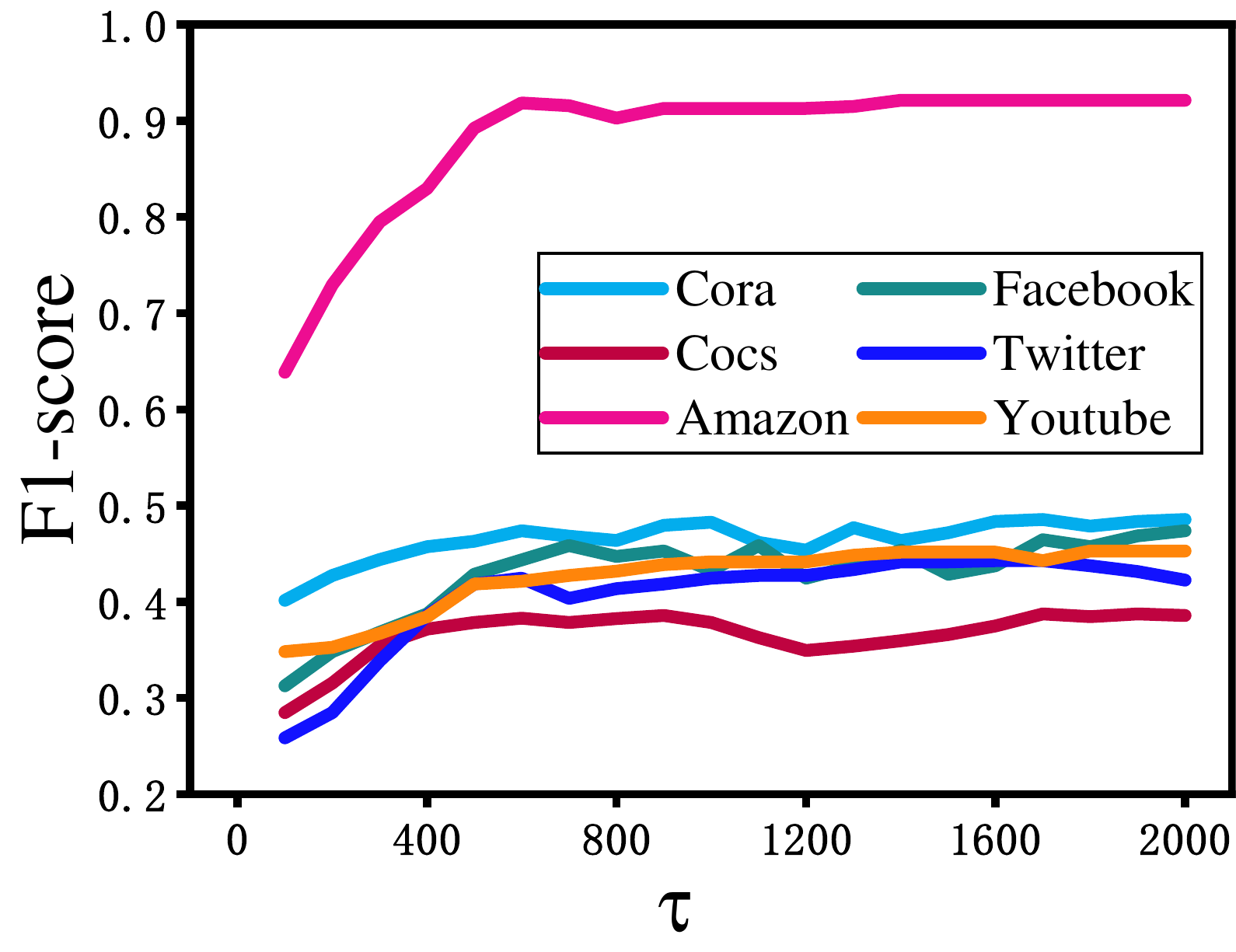}
    }
     \subfigure[Relative F1-score with varying $\eta $]{
    \includegraphics[height=3cm]{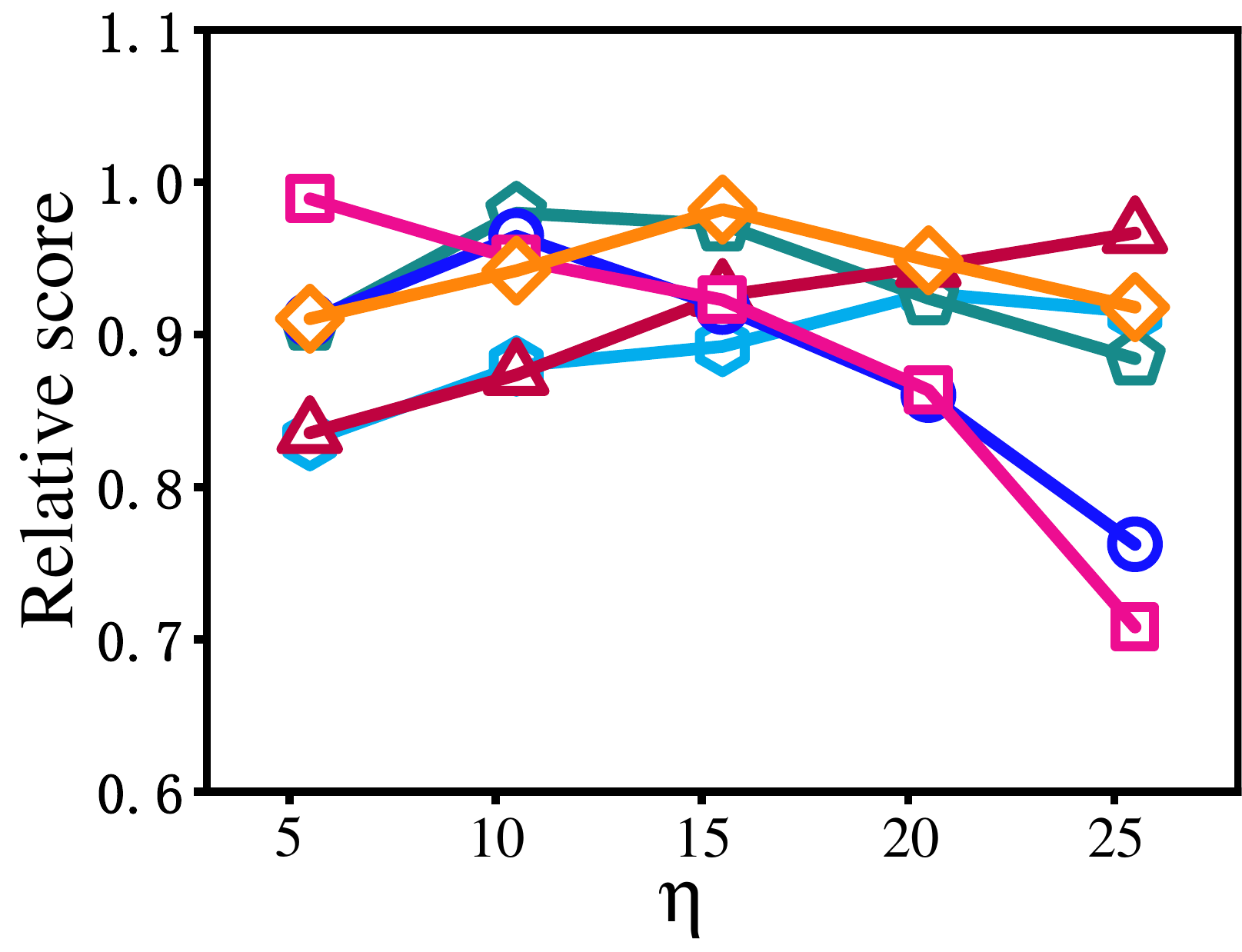}
    }
    \vspace{-0.2cm}
    \caption{Performance of NCSAC with varying parameters.} \vspace{-0.3cm}
    \label{fig:hyperparameters}
\end{figure*}

\stitle{Exp-3: Performance of NCSAC with varying parameters.} This experiment investigates how the parameters  affect the performance of our \ourname. 
Figure \ref{fig:hyperparameters}  only reports the F1-score,
with analogous trends observed on NMI and JAC. Specifically, we have the following conclusions: 
(1) Figure \ref{fig:hyperparameters} (a) presents the results obtained by varying $\beta$, which strikes a balance between topology-based conductance and attribute-based conductance in Equation \ref{eq:aec}. As depicted, on five out of six datasets (with Amazon being an exception showing no significant change with $\beta$), the performance of \ourname \ diminishes as $\beta$ rises. This trend can be intuitively understood as follows: as $\beta$ increases, the weight assigned to topological factors grows, thereby overshadowing the influence of attribute-based factors and subsequently resulting in a decline in quality. This finding highlights the pivotal role of attribute similarity and validates the effectiveness of our proposed attribute-augmented conductance. (2) Figure \ref{fig:hyperparameters} (b) illustrates the performance of \ourname \ with varying  $\alpha$, which serves to balance the triplet loss and contrastive loss in Equation \ref{eq:loss}. As can be seen,  The F1-score remains relatively stable as $\alpha$ increases, indicating that the impact of the triplet loss significantly exceeds that of the contrastive loss. Subsequent ablation studies provide further support for this observation.  (3) Figure \ref{fig:hyperparameters} (c) presents the influence of the number of episodes $\tau$ on the F1-score. 
As observed, the F1-score experiences a rapid increase (by 0.2-0.3) within the initial 800 episodes, followed by a gradual slowdown (of less than 0.05) in the subsequent 800-2000 episodes. This finding suggests that a relatively small number of episodes are sufficient to achieve a high F1-score. 
(4) Let $\eta$ be the fixed number of nodes refined during the refinement phase (section \ref{subsec:rlcr}) and $RFS=\frac{F1-score \ obtained \ by \ FixNCSAC}{F1-score \ obtained \ by \ NCSAC}$ be  the relative F1-score, where FixNCSAC is the NCSAC with the fixed number of nodes refined. This experiment studies the effectiveness of the proposed flexible termination strategy (details in Section \ref{subsec:rlcr}). Figure \ref{fig:hyperparameters} (d) shows the  results. As illustrated, \emph{RFS} consistently remains below 1 and exhibits no discernible pattern of variation across all datasets. Thus, this result provides compelling evidence that our proposed flexible termination strategy consistently outperforms the fixed-node refinement strategy.





\subsection{Efficiency Evaluation} \label{subsec:runtime}
\stitle{Exp-4: Runtime of different algorithms.} Figure \ref{fig:efficiency} depicts the runtime  of various algorithms across all datasets. The observations are as follows: (1) Traditional rule-based methods (namely CTC, CST, and MkECS) exhibit consistently faster execution speeds compared to learning-based approaches, albeit at the expense of significantly compromised quality, as discussed in Section \ref{subsec:eff}. This is attributable to their reliance on predefined structural constraints for localized community search, eliminating the need for intricate learning processes. However, the inflexibility of these patterns undermines the quality of the results. (2) Our proposed model, \ourname, demonstrates runtime efficiency comparable to other learning-based methods. Notably, \ourname \ substantially surpasses alternative approaches on datasets with well-defined structures, such as Amazon and Twitter. This superiority stems from the efficacy of our community extractor, which swiftly identifies candidate community structures, thereby minimizing the necessity for extensive refinement. Conversely, in datasets with less discernible community structures, like Cora and Cocs, TransZero shows faster query times due to its simpler termination condition, which ceases the search upon the initial decline in community score. In contrast, CommunityAF employs a more elaborate termination criterion based on a sliding window approach, leading to longer query time. (3) ICS-GNN emerges as the most time-intensive method, as it necessitates retraining for each individual query, equating its query time to the training time. Consequently, ICS-GNN incurs the highest computational cost among all evaluated methods.
In summary, these results give some preliminary evidence that our model is indeed efficient in practice.

\begin{figure}[t!]
    \centering
\includegraphics[width=0.9\linewidth]{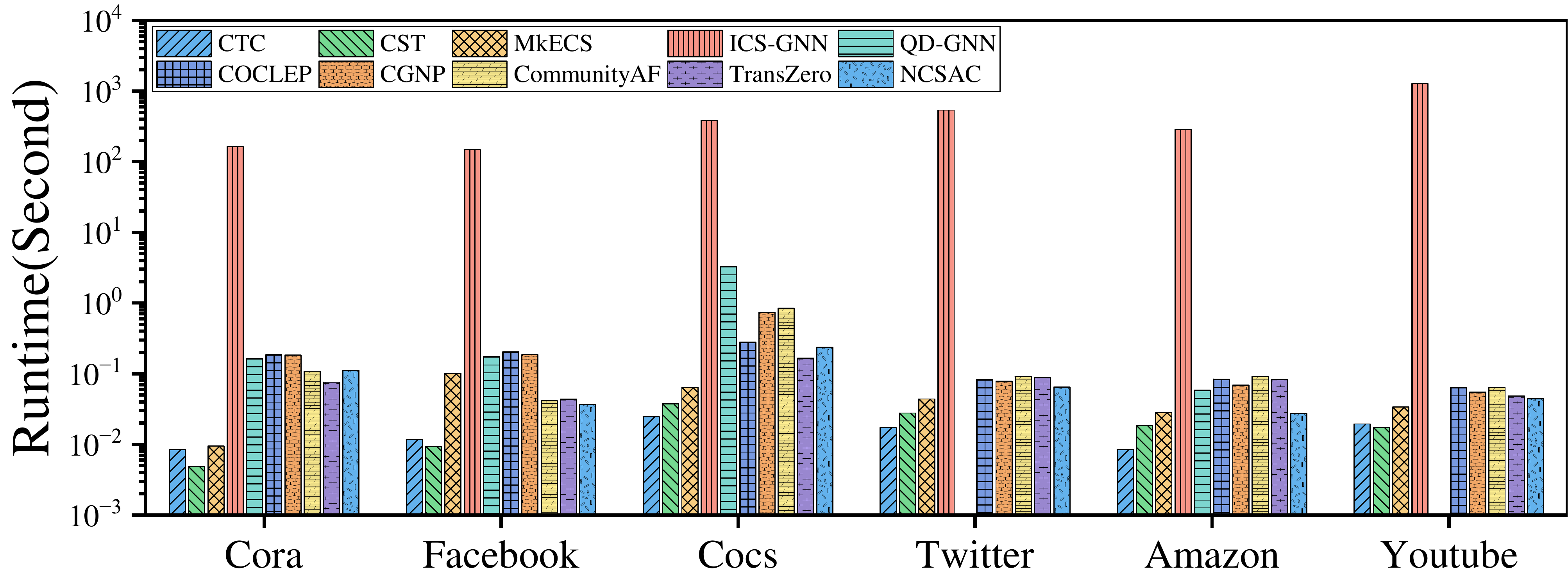}\vspace{-0.3cm}
\caption{Runtime of different algorithms (best view in color).}
\label{fig:efficiency}\vspace{-0.3cm}
\end{figure}

\begin{table}[t]
\caption{Efficiency of adaptive community extractor.}\vspace{-0.3cm}
\centering 
\resizebox{0.95\linewidth}{!} 
{
\begin{tabular}{c|cccccc}
\toprule
\label{table:compare}
\rule{0pt}{10pt} 
\textbf{Dataset}    &Cora    &Facebook    &Cocs    &Twitter   &Amazon    &Youtube \\ \midrule 
\emph{NAC}  &0.3128    &0.1174    &3.8346    &24.8306      &0.0231      &4.1999    \\ 
\emph{ACE}  &0.02    &0.0097   &0.1153     &0.0131      &0.0003      &0.0338    \\ \bottomrule
\end{tabular}}\label{tab:alg}\vspace{-0.3cm}
\end{table}

\stitle{Exp-5: Efficiency of adaptive community extractor.} To evaluate the efficiency of our adaptive community extractor (i.e.,  \emph{ACE} in Algorithm \ref{PCE}), we implement the naive method, named \emph{NAC} (discussed as a  failed attempt in Section \ref{ace}). Namely, \emph{NAC}  materializes the dense multigraph (Figure \ref{fig:example}) for computing the attribute-augmented conductance. As shown in Table \ref{tab:alg}, our proposed \emph{ACE} is at least 12$\times $ faster than the naive method \emph{NAC}. Specifically,  \emph{ACE} is 15$\times $, 12$\times $, 33$\times $, 1895$\times $, 77$\times $, 124$\times $ faster than \emph{NAC} on Cora, Facebook, Cocs, Twitter, Amazon, and Youtube respectively. 
This is because  
our \emph{ACE} requires only linear time complexity (Theorem \ref{thm:pce}), whereas the worst-case scenario for \emph{NAC} entails a quadratic time complexity. Thus, these results align with our theoretical analysis (Section \ref{ace}). 

 \begin{figure}[t!]
    \centering
\includegraphics[width=0.95\linewidth]{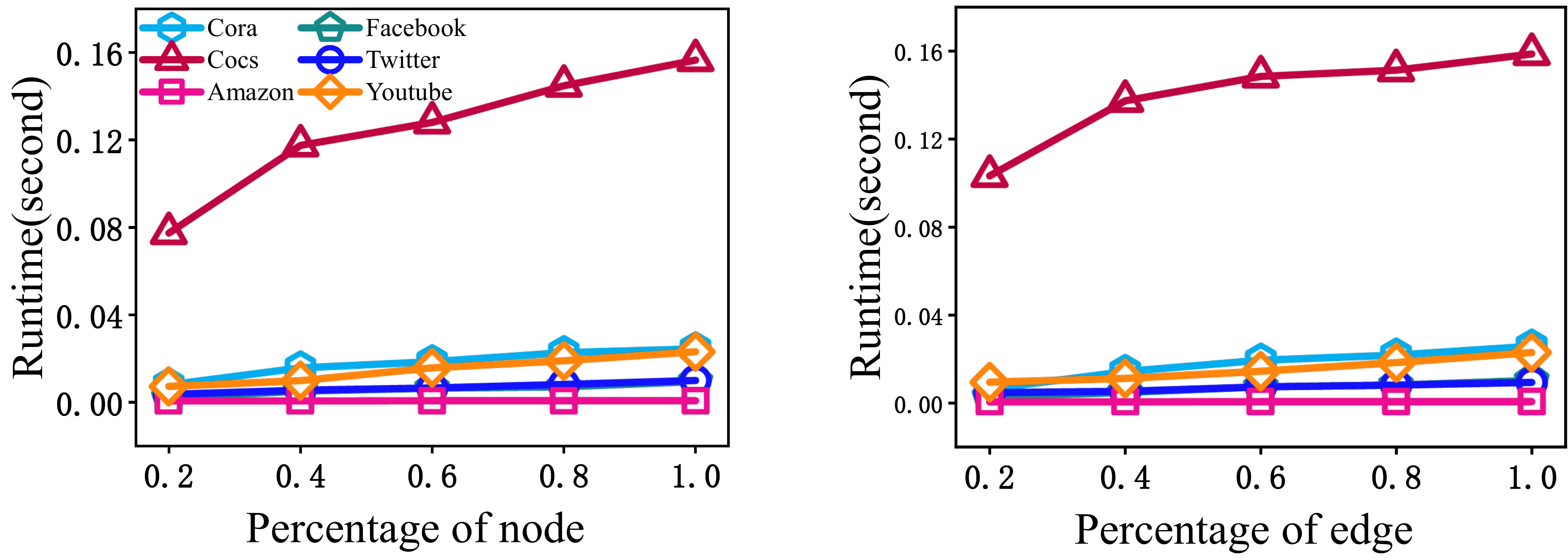}
\caption{Scalability testing.}
    \label{fig:scab}
\vspace{-0.3cm}
\end{figure}

\stitle{Exp-6: Scalability testing.} 
To further evaluate the scalability of our adaptive community extractor, we first generate 60 synthetic subgraphs by randomly sampling 20\%, 40\%, 60\%, 80\%, and 100\% of vertices or edges from each of the six datasets. Subsequently, we present the runtime on these subgraphs in Figure \ref{fig:scab}.  As depicted, the runtime of Algorithm \ref{PCE} scales nearly linearly with the size of the subgraphs. These findings suggest that our proposed algorithm can deal with massive graphs.

\subsection{Ablation Studies and Case Studies}

\begin{figure*}[t!]
    \centering
    \subfigure{
    \includegraphics[width=0.3\linewidth]{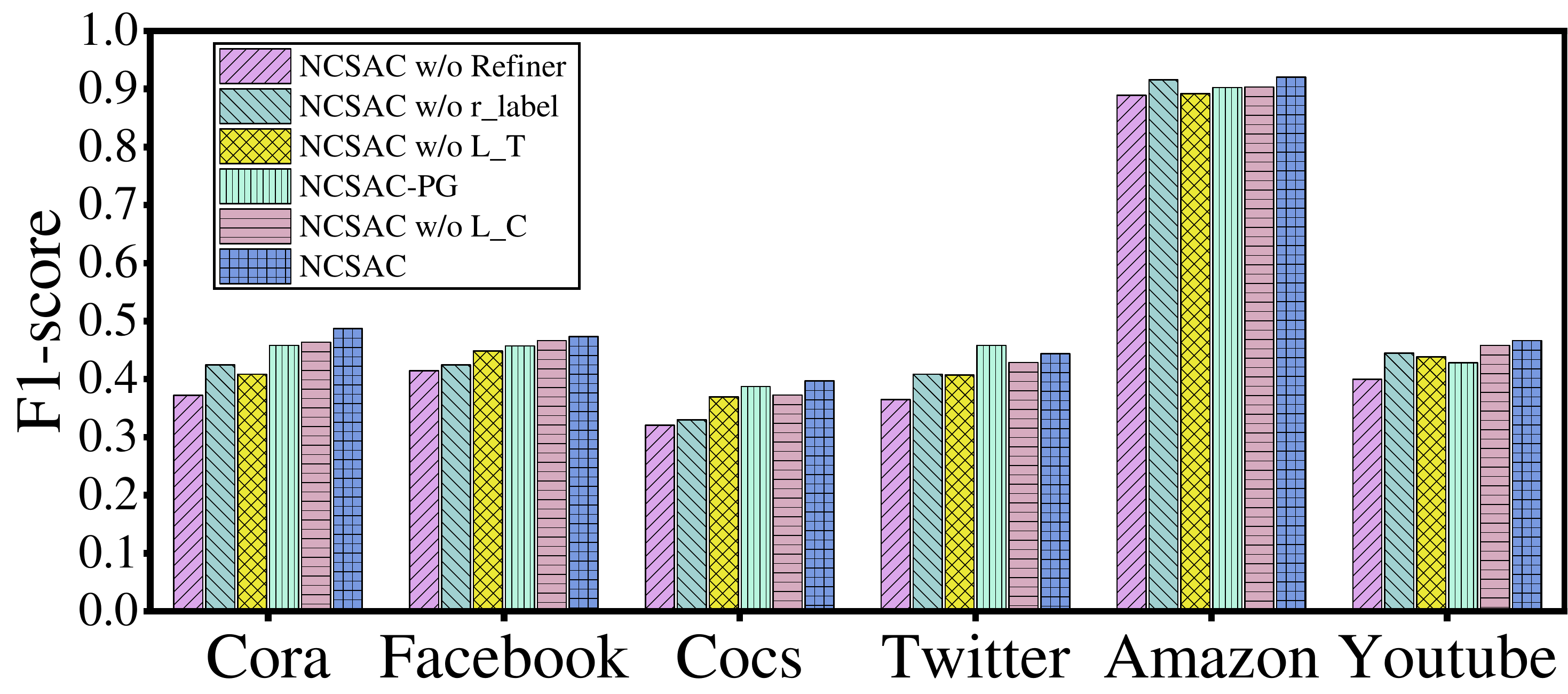}
    }
    \subfigure{
    \includegraphics[width=0.3\linewidth]{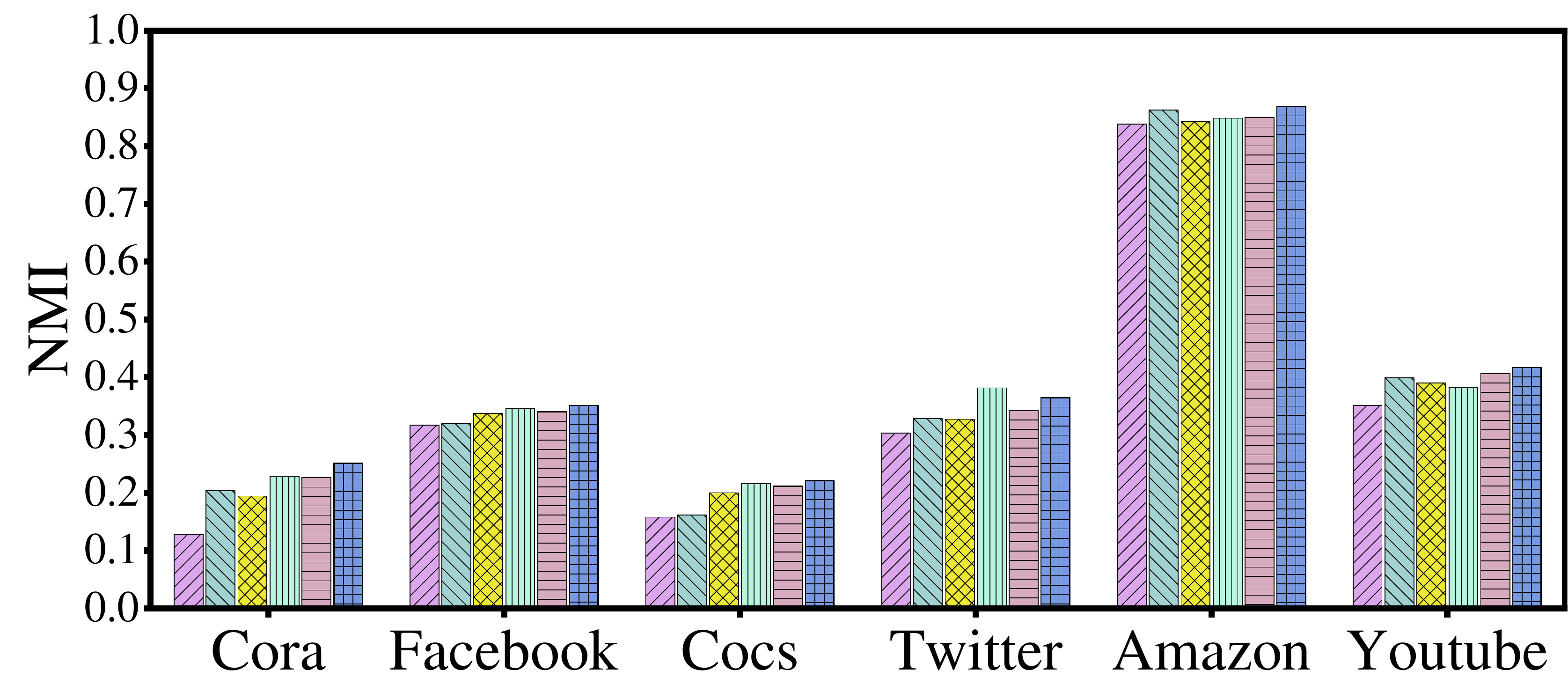}
    }
    \subfigure{
    \includegraphics[width=0.3\linewidth]{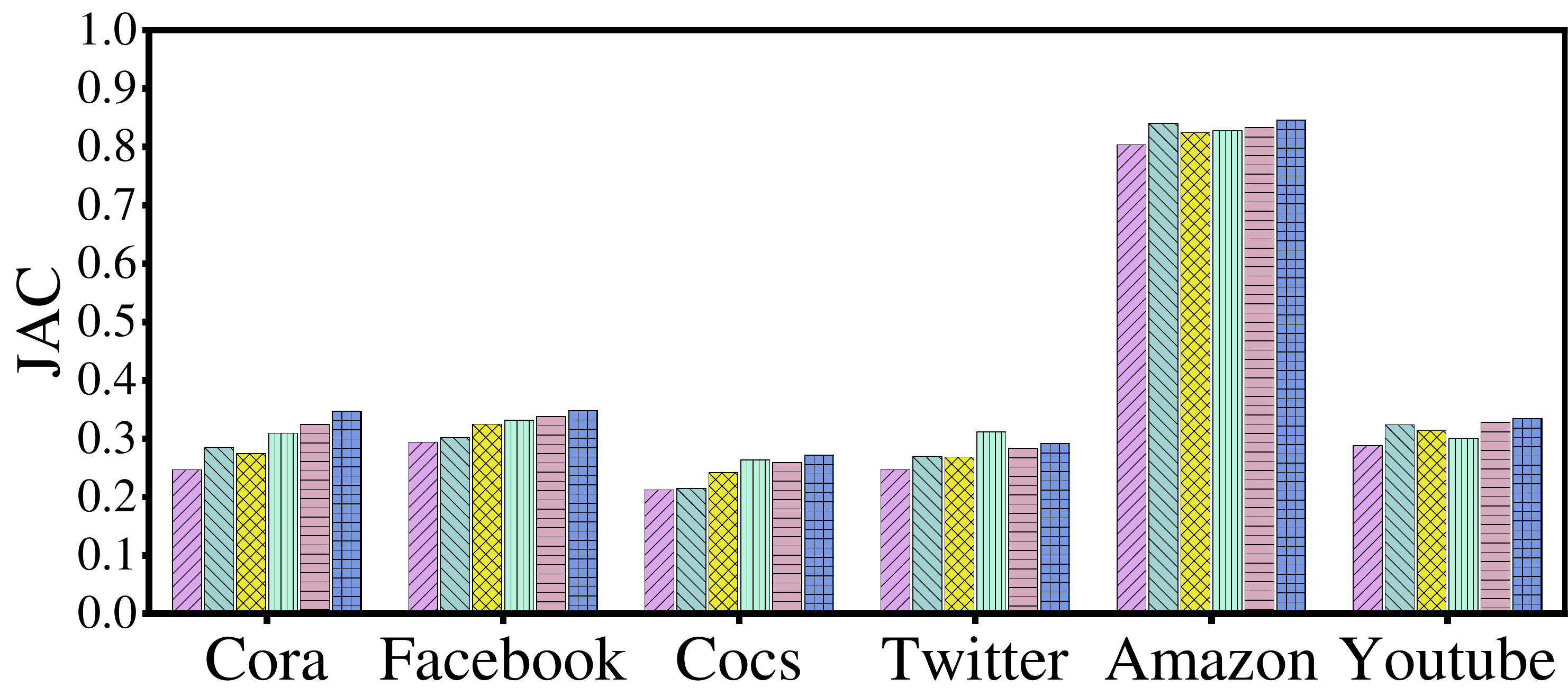}
    }
    \vspace{-0.2cm}
    \caption{Ablation studies (best view in color).} \vspace{-0.3cm}
    \label{fig:ablation}
\end{figure*}

\begin{figure*}[t!]
    \centering
    \subfigure[the ground-truth community]{
    \includegraphics[width=0.23\linewidth]{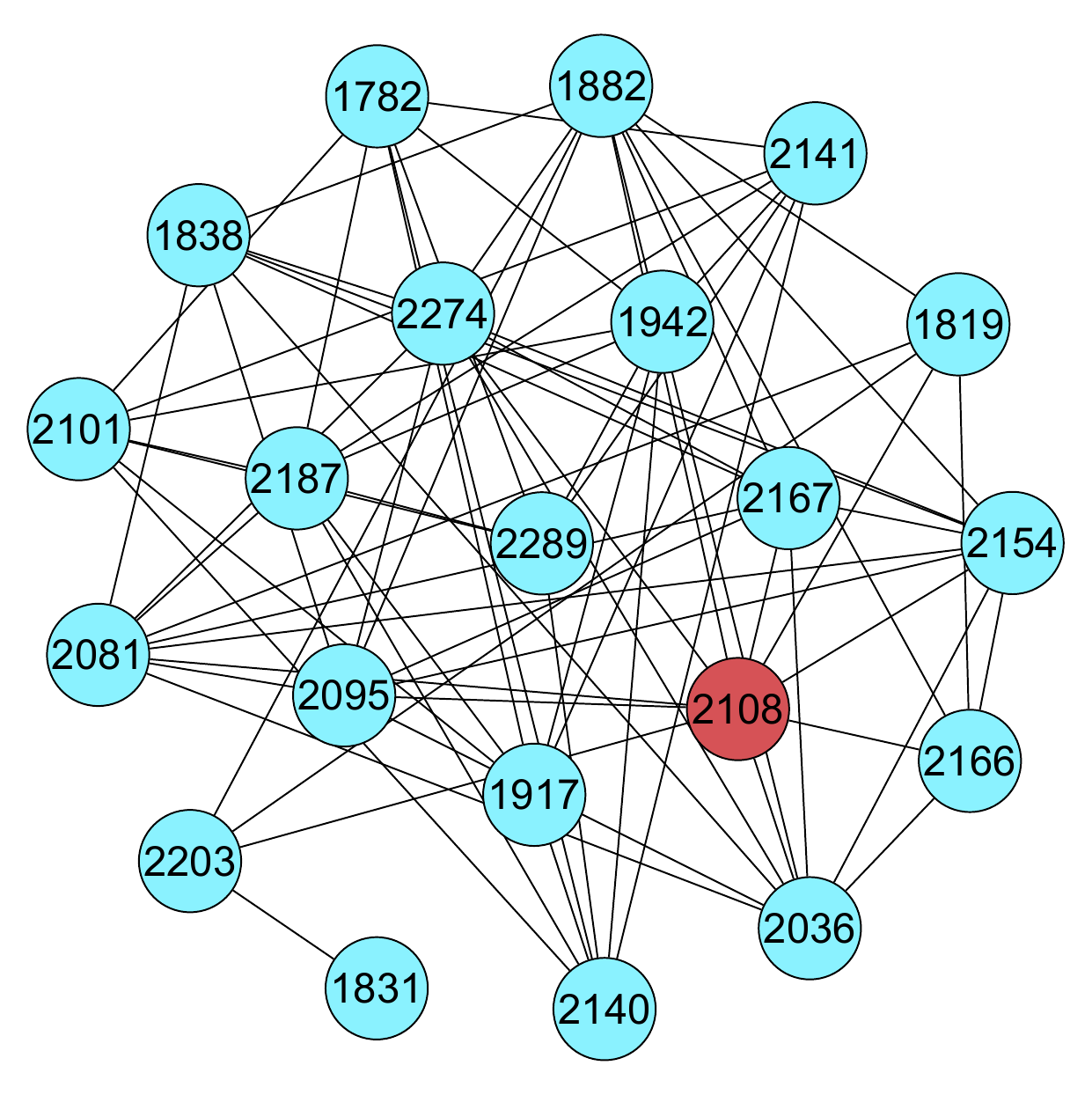}
    }
    \subfigure[CommunityAF]{
    \includegraphics[width=0.23\linewidth]{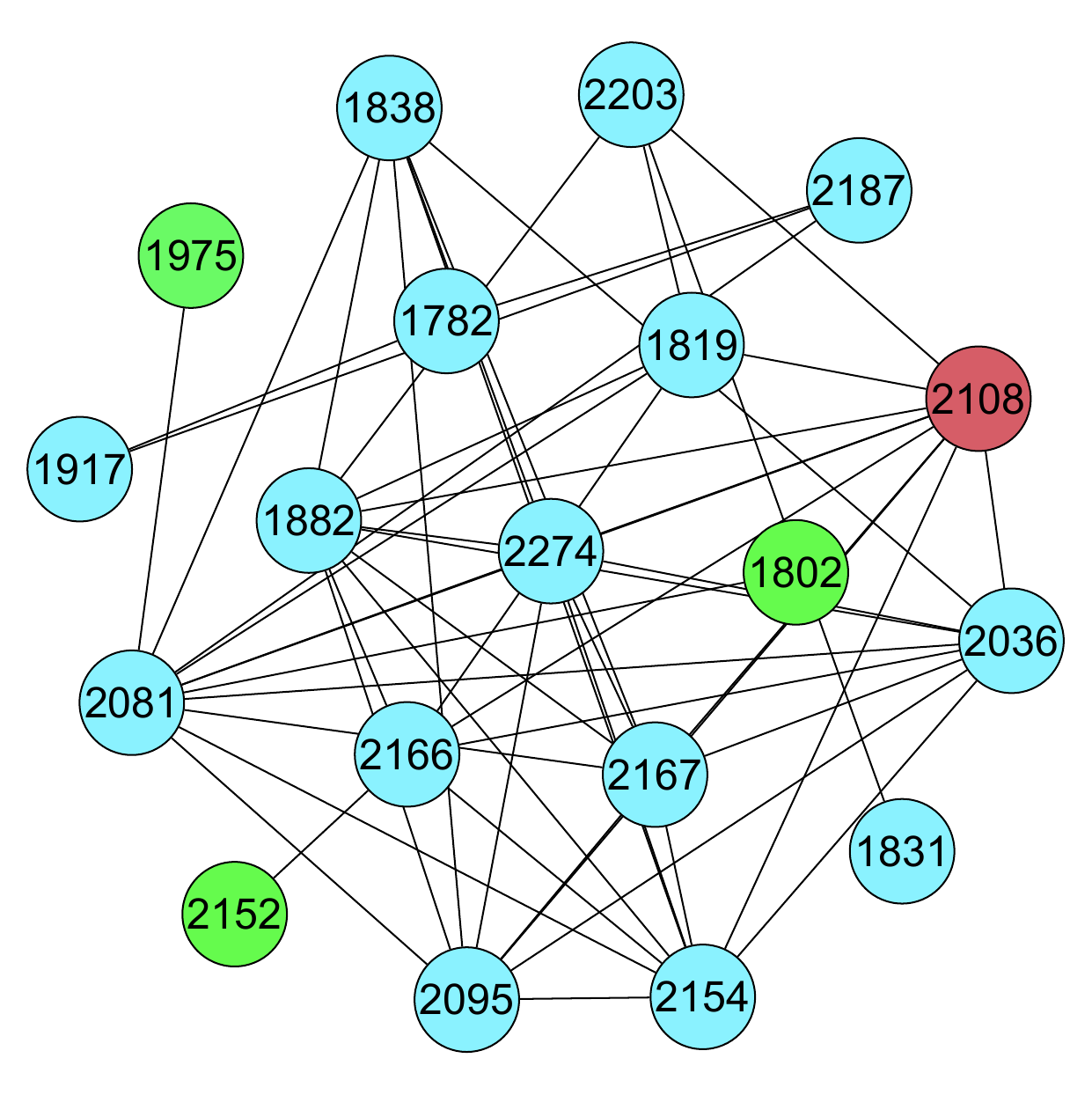}
    }
    \subfigure[TransZero]{
    \includegraphics[width=0.23\linewidth]{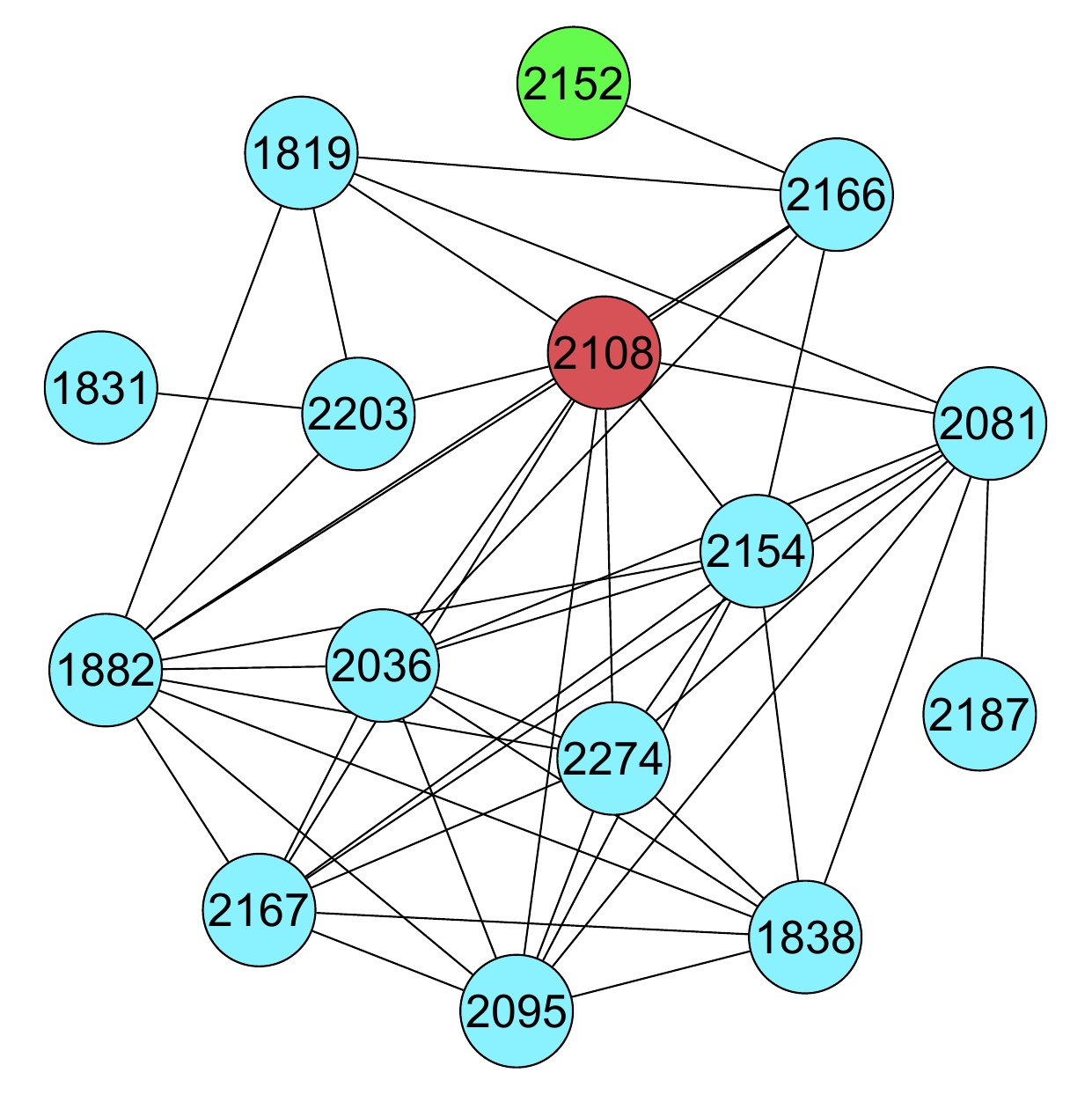}
    }
        \subfigure[NCSAC]{
    \includegraphics[width=0.23\linewidth]{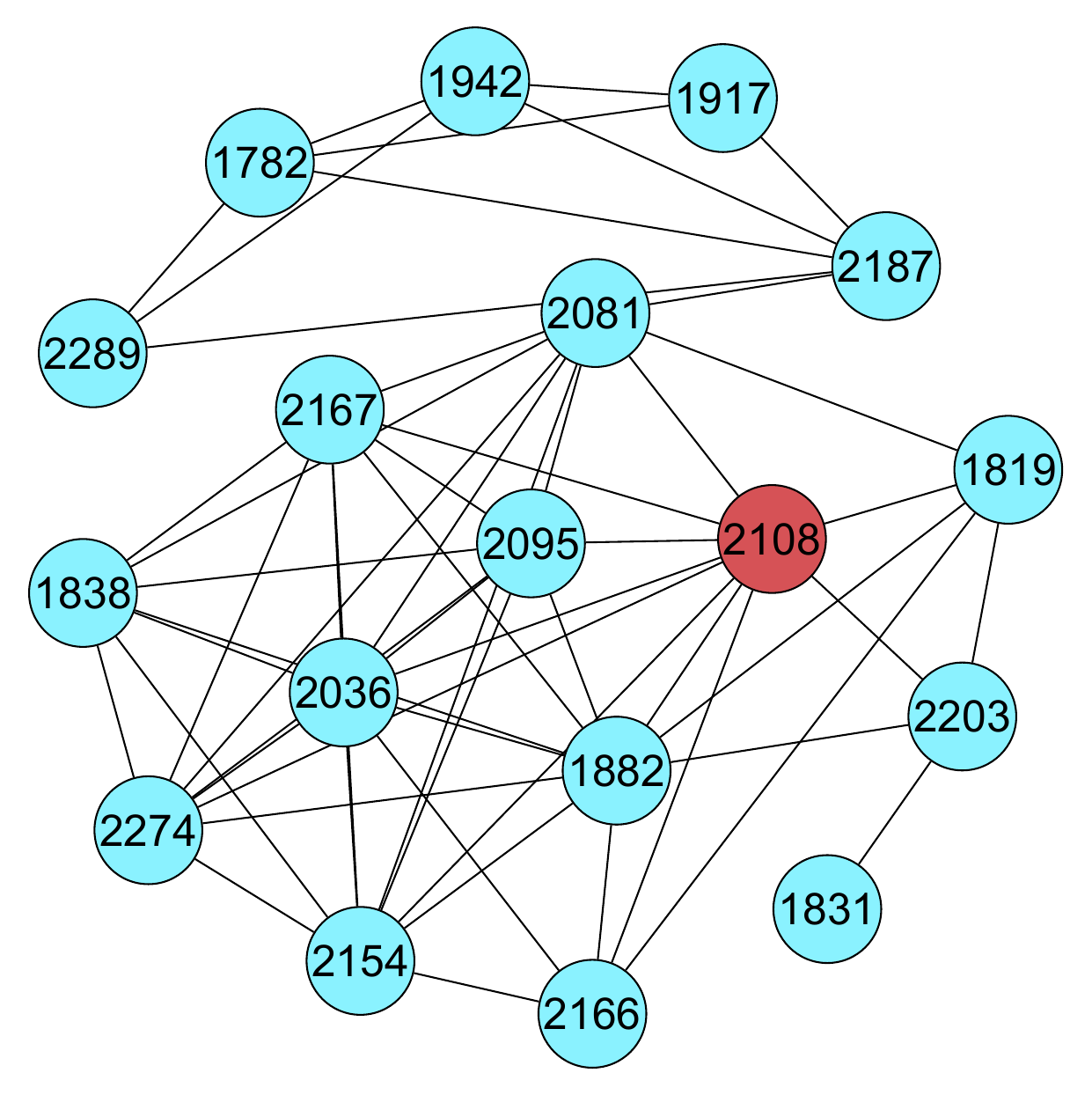}
    }
    \vspace{-0.2cm}
    \caption{Case studies. The red node represents the query node, while the blue node belongs to the ground-truth community and the green node does not.} \vspace{-0.3cm}
    \label{fig:case}
    \vspace{-0.3cm}
\end{figure*}

\stitle{Exp-7: Ablation studies.} Figure \ref{fig:ablation} illustrates the ablation studies of four key components of our \ourname. Specifically, NCSAC w/o Refiner is the NCSAC without the community refinement stage, NCSAC w/o r\_label denotes training the two policy networks without using the reward enhancement mechanism $r_{label}$, NCSAC w/o L\_C involves training the state encoder without incorporating the contrastive loss $L_C$, NCSAC w/o L\_T trains the state encoder without the triplet loss $L_T$,  NCSAC-PG substitutes the PPO algorithm with a traditional policy gradient (PG) method. Note that we only explain the F1-score because NMI and JAC have similar results. For the F1-score,  we have the following conclusions: (1) \textit{NCSAC vs. NCSAC w/o Refiner.} The community refinement module significantly improves the quality of the candidate coarse community, fulfilling its intended purpose. Specifically, the refinement module improved by at least 0.05 across all datasets, with the most significant improvement reaching up to 0.1 on the Cora. (2) \textit{NCSAC vs. NCSAC w/o r\_label.} The reward enhancement mechanism $r_{label}$ is designed to facilitate the convergence of the policy networks, leading to consistent improvements in community quality across all six datasets.   Notably, it achieves a gain of 0.08 on Cocs and 0.01 on Amazon, which aligns well with our design intuition.   The effectiveness of this mechanism becomes particularly evident in large communities, where the $\textit{Obj}(\cdot)$ (Eq.~\eqref{reward}) change caused by adding a single node is marginal, making convergence difficult.   In such cases,$r_{label}$ provides additional guidance that stabilizes the training process. (3) \textit{NCSAC vs.  NCSAC w/o L\_C \& NCSAC w/o L\_T.} Within the community-aware state encoder (Section \ref{subsec:ce}), the contrastive loss and triplet loss can indeed effectively incorporate community awareness into the embedded space of nodes (Eq. \ref{eq:loss}). In particular, triplet loss $L_T$ plays a more substantial role in enhancing state representations compared to contrastive loss $L_C$. For example, on Cora, the F1-score of NCSAC w/o L\_T is 0.41, while that of  NCSAC w/o L\_C is 0.49. (4) \textit{NCSAC vs. NCSAC-PG.} In the reinforcement learning component, the proximal policy optimization (as outlined in Eq. \ref{PPO}) demonstrates an average improvement of 0.2 compared to PG across five of the six datasets. Nonetheless, on the Twitter dataset, PG outperforms PPO by approximately 0.1, which we postulate might be attributed to overfitting in PPO under the same sampling conditions. (5) The full set of components (i.e., \ourname) yields the best performance, indicating the effectiveness of these components in enhancing algorithm performance. In summary, these results provide some preliminary evidence that the four key components we have designed are practical and effective in real-world applications.

\stitle{Exp-8: Case studies.} We conduct  case studies on Facebook \cite{gt} with the ground-truth community to further evaluate the utility of the proposed NCSAC. Specifically, we choose node 2108 as the query node (marked in red), and the communities identified by different models are visualized in Figure \ref{fig:case}. As can be seen, the communities identified by both CommunityAF and TransZero include nodes (depicted in green) that do not belong to the ground-truth community.  Besides, they also miss many nodes within the ground-truth community. Consequently, their F1-scores are 0.8 and 0.77, respectively. In contrast,  our NCSCA exclusively identifies nodes from the ground-truth community,  with only three nodes from the ground-truth community being  undetected. Thus, NCSCA achieves the highest congruence with the ground-truth community, boasting an impressive F1-score of 0.92. This result can be explained as follows: TransZero employs a  stricter termination condition, halting the search once the community quality starts to diminish. While this prevents over-expansion, it also restricts the exploration of potentially relevant nodes. CommunityAF, on the other hand, utilizes a sliding window mechanism, enabling it to explore a broader range of promising nodes but at the expense of including more irrelevant ones. However, our NCSCA adopts the more flexible termination strategy (details in Section \ref{subsec:eff}), which allows it to align closely with the ground-truth community.

\vspace{-0.2cm}
\section{Related Work}
\vspace{-0.2cm}\label{sec:existing}



\stitle{Rule-based Community Search.} 
Community search has emerged as an essential instrument for revealing the local structures within networks \cite{fang2020survey}.  Numerous models have been proposed in the literature, which can be categorized into two main types: subgraph-based and optimization-based. Specifically, $k$-core \cite{DBLP:journals/dase/LiZSYG24,k-core1,k-core2,k-core3,k-core4,DBLP:journals/pvldb/LinYLZQJJ24,DBLP:conf/dasfaa/ZhangLYJ22}, $k$-truss \cite{k-truss1, k-truss3,DBLP:journals/corr/abs-2410-15046}, and $k$-ecc \cite{k-ecc} are the representative subgraph-based models, which restrict the local connectivity of nodes or edges. For example, $k$-core is a subgraph such that each node has at least $k$-neighbors within the subgraph. On the other hand, density \cite{DBLP:journals/pvldb/WuJLZ15,DBLP:conf/sigmod/DaiQC22,DBLP:conf/kdd/YeLLLLW24}, modularity \cite{DMCS}, and conductance \cite{DBLP:conf/focs/AndersenCL06,DBLP:conf/kdd/KlosterG14,DBLP:conf/aaai/LinLJ23,DBLP:journals/eswa/HeLYLJW24} are the representative optimization-based models, which aim to achieve a globally optimal objective function. However, these rule-based community search models suffer  from the following defects: (1) They   (excluding conductance) focus on internal cohesion and neglects external sparsity, which violates the natural intentions of the community \cite{DBLP:journals/pacmmod/KamalB24,DBLP:journals/eswa/HeLYLJW24,DBLP:conf/www/LeskovecLM10}. (2) Conductance is defined based on structure, thus neglecting important node attributes. (3) They are plagued by structural inflexibility, wherein the ground-truth community may not align with user-predefined constraints, thereby resulting in suboptimal community quality \cite{ICS-GNN,QD-GNN,CommunityAf,coclep,cgnp,TransZero}.


\stitle{Learning-based Community Search.}
With the swift progress of artificial intelligence, researchers have commenced the integration of Graph Neural Network (GNN) into community search. For instance, \emph{ICS-GNN} \cite{ICS-GNN} pioneered the GNN-based approach to community search by iteratively identifying the target community with maximum GNN scores after crawling online subgraphs. \emph{QD-GNN} \cite{QD-GNN} combined local query dependency structures with global graph embedding. \emph{CommunityAF} \cite{CommunityAf} utilized autoregressive flow to generate communities and can learn various community patterns. \emph{CGNP} \cite{cgnp} and \emph{COCLEP} \cite{coclep} leveraged meta-learning and contrastive learning to solve the community search problem, respectively. However, all of these approaches completely overlook the traditional rule-based  constraints. Recently,  \emph{ALICE} \cite{ALICE} and \emph{TransZero} \cite{TransZero} effectively integrate traditional rule-based subgraph metrics with GNN, resulting in substantial improvements. At a high level, they first extracted a predefined new subgraph $G_{new}$, then identified the target community by executing the node classification task within $G_{new}$. Unfortunately, 
they are heavily dependent on the performance of $G_{new}$,
since nodes outside $G_{new}$ have no chance to contribute to the target community, resulting in poor quality (Section 6). 
However, our method focuses on optimizing communities by strategically removing and adding nodes, starting from $G_{new}$.

\vspace{-0.3cm}
\section{Conclusion}
In this paper, we propose a novel Neural Community Search via Attribute-augmented Conductance, termed \ourname,  to integrate deep learning techniques with traditional rule-based constraints, thereby improving the quality of community search. Specifically, \ourname \ utilizes the proposed attribute-augmented conductance to adaptively extract a coarse candidate community with satisfactory quality. Subsequently, \ourname \ refines the candidate community through advanced reinforcement learning techniques, ultimately producing high-quality results.  Extensive experiments are conducted on six real-life graphs, in comparison with ten competitors, to demonstrate the superiority of our proposed solution.  The following two aspects of future work are worth exploring. (1) Given that real-world graphs frequently exhibit characteristics such as dynamism, heterogeneity, heterophily, or even noise, exploring ways to extend our algorithm to accommodate these more intricate scenarios represents a promising research direction. (2) We aim to devise multiple parallel strategies to bolster the scalability of our algorithms.

\bibliographystyle{IEEEtran}
\bibliography{main}

\end{document}